\newtheorem{definition}{Definition}
\newtheorem{theorem}[definition]{Theorem}
\newtheorem{lemma}[definition]{Lemma}
\newtheorem{remark}[definition]{Remark}
\newtheorem{example}[definition]{Example}
\DeclareMathOperator{\lcm}{lcm}
\newcommand{\gcdab}[2]{\ensuremath{\gcd(#1,#2)}}
\newcommand{\lcmab}[2]{\ensuremath{\lcm(#1,#2)}}
\DeclareMathOperator{\defi}{def}
\newcommand{\defeq}{\overset{\defi}{=}}
\newcommand{\defequiv}{\overset{\defi}{\equiv}}
\newcommand{\F}[1]{\mathbb F_{#1}}
\newcommand{\Fq}{\F{q}}
\newcommand{\Fxsub}[1]{\ensuremath{\mathbb{F}_{#1}[X]}}
\newcommand{\Fqx}{\Fxsub{q}}
\newcommand{\Z}{\ensuremath{\mathbb{Z}}}
\newcommand{\SET}[1]{\ensuremath{\mathsf{#1}}}
\newcommand{\LIN}[4]{\ensuremath{[#1,#2,#3]_{#4}}}
\newcommand{\LINq}[3]{\ensuremath{[#1,#2,#3]_{q}}}
\newcommand{\defset}[2][\empty]{
  \ifthenelse{\equal{#1}{\empty}}
    {\ensuremath{\SET{D}_{#2}}}
    {\ensuremath{\SET{D}^{[#1]}_{#2}}}
}
\newcommand{\CYCa}{\ensuremath{\mathcal{A}}}
\newcommand{\CYCan}{\ensuremath{n_a}} % length
\newcommand{\CYCak}{\ensuremath{k_a}} % dimension
\newcommand{\CYCad}{\ensuremath{d_a}} % minimum Hamming distance
\newcommand{\CYCas}{\ensuremath{l_a}} % extension field
\newcommand{\CYCb}{\ensuremath{\mathcal{B}}}
\newcommand{\CYCbn}{\ensuremath{n_b}} % length
\newcommand{\CYCbk}{\ensuremath{k_b}} % dimension
\newcommand{\CYCbd}{\ensuremath{d_b}} % minimum Hamming distance
\newcommand{\CYCbs}{\ensuremath{l_b}} % extension field
\newcommand{\CYC}{\ensuremath{\mathcal{C}}}
\newcommand{\CYCn}{\ensuremath{n}} % length
\newcommand{\CYCk}{\ensuremath{k}} % dimension
\newcommand{\CYCd}{\ensuremath{d}} % minimum Hamming distance
\newcommand{\CYCprime}{\ensuremath{\overline{\mathcal{C}}}}
\newcommand{\CYCnprime}{\ensuremath{\overline{n}}} 
\newcommand{\CYCkprime}{\ensuremath{\overline{k}}} 
\newcommand{\CYCdprime}{\ensuremath{\overline{d}}} 
\newcommand{\mult}{\ensuremath{s}} 
\newcommand{\CYCs}{\ensuremath{l}} % extension field
\newcommand{\inta}{\ensuremath{a}}
\newcommand{\intb}{\ensuremath{b}}
\newcommand{\HTconst}{\ensuremath{f}}
\newcommand{\HTconsta}{\ensuremath{f_a}}
\newcommand{\HTconstb}{\ensuremath{f_b}}
\newcommand{\lenseq}{\ensuremath{\delta}}
\newcommand{\noseq}{\ensuremath{\nu}}
\newcommand{\HTmult}{\ensuremath{m}}
\newcommand{\HTmulta}{\ensuremath{m_a}}
\newcommand{\HTmultb}{\ensuremath{m_b}}
\newcommand{\HasseDer}[2]{\ensuremath{#1^{[#2]}}} %  Polynomial
\newcommand{\Der}[2]{\ensuremath{#1^{(#2)}}} % Derivation
\newcommand{\ErrorSet}{\SET{E}} % support of the error
\newcommand{\NoErrors}{\ensuremath{\varepsilon}} % cardinality of the error
\newcommand{\SupportSecond}{\SET{Y}}
\newcommand{\BoundOne}{\ensuremath{d_{\textsf{I}}}}
\newcommand{\BoundTwo}{\ensuremath{d_{\textsf{II}}}}
\newcommand{\BoundThree}{\ensuremath{d_{\textsf{III}}}}
\renewcommand{\tilde}{\widetilde} % redefinition
\renewcommand{\bar}{\overline} % redefinition
\newcommand{\brac}[2][\empty]{ 
  \ifthenelse{\equal{#1}{\empty}}
    {\ensuremath{\left(#2\right)}}
    {\ensuremath{#1\left(#2\right)}}
}
\newcommand{\binomsmall}[2]{\genfrac{(}{)}{0pt}{1}{#1}{#2}} % small binomal
\newcommand{\cupmax}{\stackrel{\text{max}}{\cup}} % union 
\newcommand{\inter}[1]{\ensuremath{[#1)}} % interval
\begin{document}
\title{Decoding of Repeated-Root Cyclic Codes up to New Bounds on Their Minimum Distance}
\IEEEoverridecommandlockouts
\author{\IEEEauthorblockN{Alexander Zeh}\thanks{This work has been supported by the German research council (Deutsche Forschungsgemeinschaft, DFG) under grants Bo867/22-1 and Ze1016/1-1 and was initiated when both authors were affiliated with the Institute of Communications Engineering, University of Ulm, Ulm, Germany.}
\IEEEauthorblockA{Computer Science Department\\
Technion, Haifa, Israel\\
\texttt{alex@codingtheory.eu}
}
\and
\IEEEauthorblockN{Markus Ulmschneider}
\IEEEauthorblockA{Institute of Communications and Navigation\\
German Aerospace Center (DLR), Germany\\
\texttt{markus.ulmschneider@dlr.de}}
}
\maketitle

\begin{abstract}
The well-known approach of Bose, Ray-Chaudhuri and Hocquenghem and its generalization by Hartmann and Tzeng are lower bounds on the minimum distance of simple-root cyclic codes. We generalize these two bounds to the case of repeated-root cyclic codes and present a syndrome-based burst error decoding algorithm with guaranteed decoding radius based on an associated folded cyclic code.

 Furthermore, we present a third technique for bounding the minimum Hamming distance based on the embedding of a given repeated-root cyclic code into a repeated-root cyclic product code. A second quadratic-time probabilistic burst error decoding procedure based on the third bound is outlined.
\end{abstract}

\begin{IEEEkeywords}
Bound on the minimum distance, burst error, efficient decoding, folded code, repeated-root cyclic code, repeated-root cyclic product code
\end{IEEEkeywords}

\section{Introduction}
The length of a conventional linear cyclic block code $\CYC$ over a finite field $\Fq$ has to be co-prime to the field characteristic $p$. This guarantees that the generator polynomial of $\CYC$ has roots of multiplicity at most one and therefore we refer to these codes as simple-root cyclic codes. The approach of Bose and Ray-Chaudhuri and Hocquenghem (BCH,~\cite{bose_class_1960, hocquenghem_codes_1959}) and of Hartmann and Tzeng (HT,~\cite{hartmann_decoding_1972, hartmann_generalizations_1972}) gives a lower bound on the minimum distance of simple-root cyclic codes. Both approaches are based on consecutive sequences of roots of the generator polynomial.
We give---similar to the BCH and the HT bound---two lower bounds on the minimum Hamming distance of a repeated-root cyclic code, i.e., a cyclic code whose length is not relatively co-prime to the characteristic $p$ of the field $\Fq$ and therefore its generator polynomial can have roots with multiplicities greater than one.

Repeated-root cyclic codes were first investigated by Berman~\cite{berman_semisimple_1967}. A special class of Maximum Distance Separable (MDS) repeated-root constacyclic codes was treated by Massey~\textit{et al.} in~\cite{massey_polynomial_1973, massey_hasse_1986} and the advantages of a syndrome-based decoding were outlined. An alternative derivation of the minimum Hamming distance of these repeated-single-root MDS codes and their application to secret-key cryptosystems was given by da Rocha in~\cite{da_rocha_jr._repeated-single-root_1994}.
Castagnoli~\textit{et al.}~\cite{castagnoli_minimum_1989, castagnoli_asymptotic_1989, castagnoli_repeated-root_1991} gave an elaborated description of repeated-root cyclic codes including the explicit construction of the parity-check matrix, which was investigated for the case $q=2$ slightly earlier by Latypov~\cite{latypov_checking_1988}. Although the asymptotic badness of repeated-root cyclic codes was shown in~\cite{castagnoli_minimum_1989, castagnoli_asymptotic_1989, castagnoli_repeated-root_1991}, several good binary repeated-root cyclic codes were constructed by van Lint in~\cite{van_lint_repeated-root_1991} with distances close to the Griesmer bound. Zimmermann~\cite{zimmermann_generalizations_1996} reproved some of Castagnoli's result by cyclic group algebra and Nedeloaia gave a squaring construction of all binary repeated-root cyclic codes in~\cite{nedeloaia_weight_2003}. Recent publications of Ling--Niederreiter--Sol\'{e}~\cite{ling_algebraic_2006} and Dinh~\cite{dinh_repeated-root_2012, dinh_structure_2013} consider repeated-root quasi-cyclic codes.

Besides the generalization of the BCH and the HT bound to repeated-root cyclic codes, we provide a third lower bound on the minimum Hamming distance. Similar to the approach~\cite{zeh_new_2012, zeh_decoding_2012} for simple-root cyclic codes, this bound is based on the embedding of a given repeated-root cyclic code into a repeated-root cyclic product code. Therefore, we recall the relevant theorems of Burton and Weldon~\cite{burton_cyclic_1965} and Lin and Weldon~\cite{lin_further_1970} for repeated-root cyclic product codes that are the basis for the proof of our third bound, which generalizes the results of our previous work on simple-root cyclic codes~\cite{zeh_new_2012, zeh_decoding_2012}. Moreover, we present two burst error decoding schemes based on the derived bounds.

The paper is structured as follows. In Section~\ref{sec_RepeatedRootCyclic}, we give necessary preliminaries for repeated-root cyclic codes and introduce our notation. Section~\ref{sec_SimpleBounds} provides the generalizations of the BCH and the HT bound, which are denoted by $\BoundOne$ and $\BoundTwo$ respectively, and in addition a syndrome-based error-correction algorithm with guaranteed decoding radius. The defining set of a repeated-root cyclic product code is given explicitly in Section~\ref{sec_DefiningSet}, which is necessary to prove our third bound $\BoundThree$ on the minimum Hamming distance of a repeated-root cyclic code in Section~\ref{sec_BoundsProduct}. Section~\ref{sec_Decoding} gives a probabilistic burst error decoding approach based on the Generalized Extended Euclidean Algorithm (GEEA,~\cite{feng_generalized_1989}). We conclude this paper in Section~\ref{sec_Conclusion}.

\section{Repeated-Root Cyclic Codes} \label{sec_RepeatedRootCyclic}
\subsection{Notation and Preliminaries}
Let $q$ be a power of a prime $p$. $\Fq$ denotes the finite field of order $q$ and characteristic $p$ and $\Fqx$ the polynomial ring over $\Fq$ with indeterminate $X$. Let $n$ be a positive integer and denote by $\inter{n}$ the set of integers $\{0,1,\dots, n-1\}$. A vector of length $n$ is denoted by a lowercase bold letter as $\mathbf{v} = (v_0 \, v_1 \, \dots \, v_{n-1})$. 
A set is denoted by a capital letter sans serif like $\SET{D}$.

A linear $\LINq{\CYCn}{\CYCk}{\CYCd}$ code over $\Fq$ of length $\CYCn$, dimension $\CYCk$ and  minimum Hamming distance $\CYCd$ is denoted by a calligraphic letter like $\CYC$.

Let us recapitulate the definition of the Hasse derivative~\cite{hasse_theorie_1936} in the following. Let $a(X) = \sum_i a_i X^i$ be a polynomial in $\Fqx$, then the $j$-th Hasse derivative is:
\begin{equation} \label{eq_defHasseDerivative}
\HasseDer{a}{j}(X) \defeq \sum_i \binom{i}{j} a_i X^{i-j}.
\end{equation}
Let $\Der{a}{j}(X)$ denote the formal $j$-th derivative of $a(X)$. The fact that
$\Der{a}{j}(X) = j! \, \HasseDer{a}{j}(X) $ explains why the Hasse derivative is considered in fields with a prime characteristic $p$, because then $j! = 0$ and hence also $\Der{a}{j}(X) = 0$ for all $j \geq p$. 
We say a univariate polynomial $a(X) \in \Fqx$ with $\deg a(X) \geq \mult$ has a root at $\gamma$ with multiplicity $s$ if:
\begin{equation*}
\HasseDer{a}{j}(\gamma) = 0, \quad \forall j \in \inter{\mult}.
\end{equation*}

\subsection{Defining Set}
A linear $\LINq{\CYCnprime}{\CYCkprime}{\CYCdprime}$ simple-root cyclic code $\CYCprime$ over $\Fq$ with characteristic $p$ is an ideal in the ring $\Fqx / (X^{\CYCnprime}-1)$ generated by $\overline{g}(X)$, where $\gcd(\CYCnprime,p)=1$.
The generator polynomial $\overline{g}(X) \in \Fqx$ has roots with multiplicity at most one in the splitting field $\F{q^{\CYCs}}$, where $\CYCnprime \mid (q^{\CYCs} -1)$. 
A cyclotomic coset $ M_{i,\CYCnprime,q}$ is denoted by:
\begin{equation*} 
 M_{i,\CYCnprime,q} = \big\{ \, iq^j \mod \CYCnprime \; \vert \; j \in \inter{\CYCnprime_i} \big\},
\end{equation*}
where $\CYCnprime_i$ is the smallest integer such that $iq^{\CYCnprime_i} \equiv
i \mod \CYCnprime$. Let $\gamma$ be an element of order $\CYCnprime$ in $\F{q^{\CYCs}}$.
The minimal polynomial of the element $\gamma^i$ is: 
\begin{equation*}
M_{i,\CYCnprime,q}(X) = \prod_{j \in M_{i,\CYCnprime,q}} (X-\gamma^j).
\end{equation*}
Let $\gcd(\CYCnprime,p)=1$ and $n=p^{\mult} \CYCnprime$. A linear $\LINq{n}{k}{d}$ repeated-root cyclic code $\CYC$ is an ideal in the ring
\begin{equation*}
\Fqx / (X^n-1) = \Fqx / (X^{\CYCnprime}-1)^{p^{\mult}}.
\end{equation*}
The generator polynomial of an $\LINq{n}{k}{d}$ repeated-root cyclic code $\CYC$ is
\begin{equation*}
g(X) = \prod_i M_{i,\CYCnprime,q}(X)^{\mult_i},
\end{equation*}
where $\mult_i \leq p^{\mult} $.
The defining set $\defset{\CYC}$ of an $\LINq{\CYCn=p^{\mult} \CYCnprime}{\CYCk}{\CYCd}$ repeated-root cyclic code $\CYC$ with generator polynomial $g(X)$ is a set of tuples, where the first entry of the tuple is the index of a zero and the second its multiplicity, namely:
\begin{equation} \label{eq_definingset}
\defset{\CYC}  =  \Big\{ i^{\langle \mult_i \rangle}  \; | \; 0 \leq i \leq \CYCnprime-1,\quad \HasseDer{g}{j}(\gamma^i)=0, \quad \forall j \in \inter{\mult_i} \Big\},
\end{equation}
Furthermore, we introduce the following short-hand notation for a given $z \in \Z$:
\begin{equation} \label{eq_definingsetext}
\begin{split}
\defset[z]{\CYC}  & \defeq  \Big\{ (i+z)^{\langle \mult_i \rangle} \; | \; i^{\langle \mult_i \rangle} \in \defset{\CYC} \Big\}.
\end{split}
\end{equation}
For two given defining sets $\defset{\CYCa}$ and $\defset{\CYCb}$, define 
\begin{equation} \label{eq_cupmaxdefintion}
\defset{\CYCa} \cupmax \defset{\CYCb} \defeq \Big\{ i^{\langle \mult_i \rangle}   \; | \;  \mult_i = \max(a_i,b_i), \text{where} \; i^{\langle a_i \rangle}  \in  \defset{\CYCa} \; \text{and} \; i^{\langle b_i \rangle}  \in  \defset{\CYCb} \Big\}.
\end{equation}

\section{Two Bounds On the Minimum Hamming Distance of Repeated-Root Cyclic Codes And Burst Error Correction} \label{sec_SimpleBounds}
\subsection{Lower Bounds on the Minimum Hamming Distance} \label{subsec_BCHHTBounds}
In the following, we prove two lower bounds on the minimum Hamming distance of repeated-root cyclic codes. They generalize the well-known BCH~\cite{bose_class_1960, hocquenghem_codes_1959} and HT~\cite{hartmann_decoding_1972} approach suited for simple-root cyclic codes.
\begin{theorem}[Bound I: BCH-like Bound for a Repeated-Root Cyclic Code] \label{theo_BCHRR}
Let an $\LINq{\CYCn}{\CYCk}{\CYCd}$ repeated-root cyclic code $\CYC$ over $\Fq$ with characteristic $p$ and generator polynomial $g(X)$ with $\deg g(X) \geq p^{\mult}-1$ be given. Let $\CYCn=p^{\mult} \CYCnprime$, where \(\gcd(\CYCnprime, p) = 1\). Let \(\gamma\) be an element of order \(\CYCnprime\) in an extension field of $\Fq$. Furthermore, let three integers $\HTconst$, $\HTmult \neq 0$ and $\lenseq \geq 2$ with \(\gcd(\CYCnprime,\HTmult) = 1\) be given, such that for any codeword \(c(X) \in \CYC\)
\begin{equation} \label{eq_bch_proof_rr_basis}
 \sum_{i=0}^\infty \HasseDer{c}{p^{\mult}-1}(\gamma^{\HTconst+i\HTmult}) X^i \equiv 0 \mod X^{\lenseq-1}
\end{equation}
holds.
Then, the minimum distance of \(\CYC\) is at least $\BoundOne \defeq \lenseq$.
\end{theorem}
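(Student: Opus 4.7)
The plan is to reduce the theorem to a classical Hartmann--Tzeng-style Vandermonde argument applied to an auxiliary polynomial of length $\CYCnprime$, then invoke cyclicity of $\CYC$ to finish. Throughout, I will assume for contradiction that there is a nonzero codeword $c(X)\in\CYC$ with Hamming weight strictly less than $\lenseq$, and aim to show $c(X)=0$.

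The first and technically most delicate step will be to simplify $\HasseDer{c}{p^{\mult}-1}(X)$. Writing $c(X) = \sum_{j=0}^{\CYCn-1} c_j X^j$ and using Lucas' theorem in characteristic $p$, the binomial coefficient $\binom{j}{p^{\mult}-1}$ vanishes modulo $p$ unless the $\mult$ least-significant base-$p$ digits of $j$ are all equal to $p-1$, i.e.\ unless $j\equiv -1\pmod{p^{\mult}}$; when this holds, $\binom{j}{p^{\mult}-1}\equiv 1\pmod p$. Writing $j=p^{\mult}k-1$, this collapses the Hasse derivative to
\[
\HasseDer{c}{p^{\mult}-1}(X) \;=\; \sum_{k=1}^{\CYCnprime} c_{p^{\mult}k-1}\, X^{p^{\mult}(k-1)} \;=\; h(X^{p^{\mult}}),
\]
where $h(Y):=\sum_{k=1}^{\CYCnprime} c_{p^{\mult}k-1}\,Y^{k-1}$ has degree less than $\CYCnprime$. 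Setting $\tilde{\gamma}:=\gamma^{p^{\mult}}$, which still has order $\CYCnprime$ because $\gcd(p^{\mult},\CYCnprime)=1$, condition \eqref{eq_bch_proof_rr_basis} becomes $h(\tilde{\gamma}^{\HTconst+i\HTmult})=0$ for $i=0,1,\dots,\lenseq-2$.

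Next I will run the standard Vandermonde argument on $h$. Because $\gcd(\CYCnprime,\HTmult)=1$, the element $\tilde{\gamma}^{\HTmult}$ has order $\CYCnprime$, so if $h$ has weight $w$ with $1\le w<\lenseq$, extracting any $w$ of the $\lenseq-1$ evaluation equations produces a $w\times w$ Vandermonde submatrix in the pairwise distinct values $\tilde{\gamma}^{\HTmult k}$ indexed by the support of $h$; its determinant is nonzero, forcing the nonzero coefficients of $h$ to vanish, a contradiction. Therefore either $h=0$ or $\mathrm{wt}(h)\ge \lenseq$. Since the nonzero coefficients of $h$ form a subset of those of $c$, we get $\mathrm{wt}(h)\le \mathrm{wt}(c)<\lenseq$, and hence $h=0$, i.e.\ $c_j=0$ for every $j\equiv -1\pmod{p^{\mult}}$.

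To finish I will exploit that $\CYC$ is cyclic: for each $s\in\{0,1,\dots,p^{\mult}-1\}$ the shifted polynomial $X^s c(X)\bmod (X^{\CYCn}-1)$ lies in $\CYC$ and therefore also satisfies \eqref{eq_bch_proof_rr_basis}, so the previous paragraph applied to the shift gives $c_j=0$ for every $j\equiv -1-s\pmod{p^{\mult}}$. Letting $s$ range over $\{0,1,\dots,p^{\mult}-1\}$ exhausts every residue class modulo $p^{\mult}$, forcing $c=0$, contradicting the assumption. I expect the main obstacle to be the Hasse-derivative simplification: once Lucas' theorem cleanly identifies the surviving monomials, the rest is a textbook BCH/HT Vandermonde computation combined with the shift argument; the degree hypothesis $\deg g(X)\ge p^{\mult}-1$ will enter, if at all, only to guarantee that \eqref{eq_bch_proof_rr_basis} is non-vacuous.
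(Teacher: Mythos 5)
Your proof is correct, but it takes a genuinely different route from the paper's. The paper works directly on the support $\SET{Y}$ of $c(X)$: it expands the series in \eqref{eq_bch_proof_rr_basis} via the geometric series, puts all terms over the common denominator $\lcm\big(1-\gamma^{(u-p^{\mult}+1)\HTmult}X : u \in \SET{Y}\big)$, and squeezes the numerator degree between $\lenseq-1$ (forced by the congruence) and $|\SET{Y}|-1$, after first arguing that the left-hand side of \eqref{eq_bch_proof_rr_basis} is not the zero series. You instead use Lucas' theorem to collapse $\HasseDer{c}{p^{\mult}-1}(X)$ to $h(X^{p^{\mult}})$, where $h$ collects exactly the coefficients $c_j$ with $j\equiv -1 \pmod{p^{\mult}}$, run the classical BCH/Vandermonde argument on the length-$\CYCnprime$ polynomial $h$, and then sweep through all residue classes modulo $p^{\mult}$ by applying the hypothesis to every cyclic shift of $c(X)$. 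This buys two things: the argument is elementary (pure Vandermonde linear algebra, no rational-function bookkeeping) and is close in spirit to the subcode/folding viewpoint of Remark~\ref{rem_alternative} and Lemma~\ref{lem_FoldingRepeatedRoots}; more importantly, it sidesteps — and effectively repairs — the paper's delicate opening claim that the series cannot vanish identically (the paper infers multiplicity-$p^{\mult}$ roots from the vanishing of the $(p^{\mult}-1)$-th Hasse derivative alone, which is not literally justified by the definition of multiplicity), since in your argument the case $h=0$ is simply absorbed by the shift step, which is legitimate because the hypothesis is assumed for \emph{all} codewords and cyclic shifts preserve weight. What the paper's formulation buys in exchange is the $\Omega(X)/\Lambda(X)$ key-equation structure that is reused verbatim for the syndrome decoder of Section~\ref{subsec_decodingBoundOneTwo}. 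Two small points: in the Vandermonde step you should take the \emph{first} $w$ of the $\lenseq-1$ equations (an arbitrary selection gives only a generalized Vandermonde matrix, whose nonsingularity over a finite field is not automatic), and you reuse the symbol $\mult$ both for the multiplicity exponent and for the shift index, so the shift variable should be renamed.
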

\begin{IEEEproof}
First, let us prove that the left-hand side of~\eqref{eq_bch_proof_rr_basis} cannot be zero. Assume it is the zero polynomial. Then, all \(\gamma^0,\gamma^1,\ldots,\gamma^{\CYCnprime-1}\) are roots of the codeword \(c(X)\) with multiplicity \(p^{\mult}\), yielding that \(\deg c(X)= p^{\mult} \CYCnprime = \CYCn\), which contradicts the fact that the degree of a codeword $c(X)$ of an $\LINq{n}{k}{d}$ code is smaller than $n$.
Second, we rewrite the expression left-hand side of~\eqref{eq_bch_proof_rr_basis} more explicitly. Let \(\SET{Y}=\{i: c_i \neq 0\}\) be the support of a non-zero codeword. 
We obtain:
\begin{align} \label{eq_proofBCHRR1}
\sum_{i=0}^\infty \HasseDer{c}{p^{\mult}-1}(\gamma^{\HTconst+i\HTmult}) X^i \nonumber
& =  \sum_{i=0}^\infty \sum_{u \in \SET{Y}} \binomsmall{u}{p^{\mult}-1}c_u {\brac{\gamma^{\HTconst+i\HTmult}}}^{u-p^{\mult}+1}  X^i \nonumber\\
& = \sum_{u \in \SET{Y}} \binomsmall{u}{p^{\mult}-1} c_u \gamma^{(u-p^{\mult}+1)\HTconst} \sum_{i=0}^\infty \brac{\gamma^{(u-p^{\mult}+1)\HTmult}X}^i.
\end{align}
With the geometric series, we get from~\eqref{eq_proofBCHRR1}:
\begin{align}
\sum_{u \in \SET{Y}} \binomsmall{u}{p^{\mult}-1} c_u \gamma^{(u-p^{\mult}+1)\HTconst} \sum_{i=0}^\infty \brac{\gamma^{(u-p^{\mult}+1)\HTmult}X}^i & = \sum_{u \in \SET{Y}} \binomsmall{u}{p^{\mult}-1}  c_u \gamma^{(u-p^{\mult}+1)\HTconst} \frac{1}{1-{\gamma^{(u-p^{\mult}+1)\HTmult}} X} \label{eq_BCHProof_Denominator}, 
\end{align}
and with 
\begin{equation} \label{eq_CommonDenominator}
 \sum_{i \in \SET{Y}} \frac{a_i}{1-Xb_i} = \frac{ \sum_{i \in \SET{Y}} a_i \frac{D}{1-Xb_i}}{D}, 
\end{equation}
where $D \defeq \lcm ( (1-Xb_i) : i \in \SET{Y}) $, we obtain from~\eqref{eq_BCHProof_Denominator}:
\begin{align}
\sum_{u \in \SET{Y}} \binomsmall{u}{p^{\mult}-1}  c_u \gamma^{(u-p^{\mult}+1)\HTconst} \frac{1}{1-{\gamma^{(u-p^{\mult}+1)\HTmult}} X} & = \frac{ \sum_{u \in \SET{Y}} \binomsmall{u}{p^{\mult}-1}  c_u \gamma^{(u-p^{\mult}+1)\HTconst} 
    \frac{\lcm \brac{  1-{\gamma^{(j-p^{\mult}+1)m}X}\ :\ {j \in \SET{Y} }  }}
     {  1-{\gamma^{(u-p^{\mult}+1)m}X}\   }} 
    {\lcm \brac{  1-{\gamma^{(i-p^{\mult}+1)m}X}\ :\ {i \in \SET{Y} }  }}\label{eq:bch_proof_rr1}\\
& \equiv 0 \mod X^{\lenseq-1}.\ \label{eq:bch_proof_rr2}
\end{align}
Obviously, the degree of the numerator of~\eqref{eq:bch_proof_rr1} cannot be greater than \(|\SET{Y}|-1\), and it cannot be smaller than \(\lenseq-1\), since \eqref{eq:bch_proof_rr2} must be fulfilled. Since this is true for all codewords, the minimum distance of \(\CYC\) is not smaller than \(|\SET{Y}|\). Thus, with \(|\SET{Y}| \geq \lenseq \) follows that the true minimum distance of \(\CYC\) is at least \(\lenseq \).
\end{IEEEproof}
Thm.~\ref{theo_BCHRR} tells us that a repeated-root cyclic code of length $\CYCn=p^{\mult} \CYCnprime$ with generator polynomial $g(X)$ that has $\delta-1$ consecutive zeros of highest multiplicity $p^{\mult}$, i.e., 
\begin{equation*}
\HasseDer{g}{p^{\mult}-1}(\gamma^{\HTconst}) = \HasseDer{g}{p^{\mult}-1}(\gamma^{\HTconst+\HTmult}) = \dots = \HasseDer{g}{p^{\mult}-1}(\gamma^{\HTconst +(\lenseq-2)\HTmult}) = 0,
\end{equation*}
has at least minimum distance $\delta$. If $s=0$, the repeated-root cyclic code is a simple-root cyclic code and then Thm.~\ref{theo_BCHRR} coincides with the BCH bound~\cite{bose_class_1960,hocquenghem_codes_1959}.
\begin{remark}[Parameters]
To obtain the parameters $\HTconst, \HTmult$ and $\lenseq$ as in Thm.~\ref{theo_BCHRR}, one needs to check the $(p^{\mult}-1)$th Hasse derivative of the given generator polynomial (respectively the defining set) of a given repeated-root cyclic code and find $\HTconst$ and $\HTmult$ that maximize $\lenseq$.
The advantage of the representation as in \eqref{eq_bch_proof_rr_basis} and in \eqref{eq_ht_proof_rr_basis} is that a syndrome definition can directly be obtained and an algebraic decoding algorithm can be formulated (see Section~\ref{subsec_decodingBoundOneTwo}).
\end{remark}

\begin{theorem}[Bound II: HT-like for a Repeated-Root Cyclic Code]  \label{theo_HTRR}
Let an $\LINq{\CYCn=p^{\mult} \CYCnprime}{\CYCk}{\CYCd}$ repeated-root cyclic code $\CYC$ over $\Fq$ with characteristic $p$ and generator polynomial $g(X)$ with $\deg g(X) \geq p^{\mult}-1$ be given, where \(\gcd(\CYCnprime, p) = 1\). Let \(\gamma\) be an element of order \(\CYCnprime\) in an extension field of $\Fq$. Furthermore, let four integers $\HTconst$, $\HTmult \neq 0$, $\lenseq \geq 2$ and $\noseq \geq 0$ with \(\gcd(\CYCnprime,\HTmult) = 1\) be given, such that for any codeword \(c(X) \in \CYC\)
\begin{equation} \label{eq_ht_proof_rr_basis}
 \sum_{i=0}^\infty \HasseDer{c}{p^{\mult}-1}(\gamma^{\HTconst+i\HTmult+j} ) X^i \equiv 0 \mod X^{\lenseq-1}, \quad \forall j \in \inter{\noseq+1}
\end{equation}
holds.
Then, the minimum distance of \(\CYC\) is at least $\BoundTwo \defeq \lenseq + \noseq$.
\end{theorem}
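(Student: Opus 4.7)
The plan is to extend the rational-function argument of Theorem~\ref{theo_BCHRR} by combining the $\noseq + 1$ shifted power series into a single linear combination that cancels $\noseq$ of the Lagrange summands, thereby tightening the degree count by $\noseq$.

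First I would repeat the derivation of~\eqref{eq_proofBCHRR1}--\eqref{eq:bch_proof_rr2} for each of the shifted series
\begin{equation*}
S_j(X) \defeq \sum_{i=0}^{\infty} \HasseDer{c}{p^{\mult}-1}(\gamma^{\HTconst + i\HTmult + j}) X^i, \qquad j = 0, 1, \ldots, \noseq,
\end{equation*}
placing them over the common denominator $D(X) = \prod_{\tau \in T}(1 - \tau^{\HTmult} X)$, where $T = \{\gamma^{u - p^{\mult} + 1} : u \in \SET{Y}\}$ collects the distinct evaluation points induced by the support $\SET{Y}$ of a fixed non-zero codeword $c(X) \in \CYC$. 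The resulting numerators $N_j(X)$ all have degree at most $|T| - 1 \leq |\SET{Y}| - 1$, and since $D(0) = 1$ the hypothesis~\eqref{eq_ht_proof_rr_basis} transfers to $N_j(X) \equiv 0 \mod X^{\lenseq - 1}$ for every $j$. Grouping support positions that share the same $\tau$ yields a Lagrange-type expansion $N_j(X) = \sum_{\tau \in T} \Lambda_\tau \tau^j A_\tau(X)$, with $A_\tau(X) = \prod_{\tau' \in T,\, \tau' \neq \tau}(1 - (\tau')^{\HTmult} X)$ and explicit field coefficients $\Lambda_\tau$ depending on $\HTconst$ and on the corresponding $c_u$.

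Next, for any polynomial $P(Y) = \sum_{j=0}^{\noseq} \mu_j Y^j$ of degree at most $\noseq$, the combined polynomial $Q(X) = \sum_j \mu_j N_j(X) = \sum_{\tau \in T} \Lambda_\tau P(\tau) A_\tau(X)$ is still divisible by $X^{\lenseq - 1}$. I would choose $P$ to vanish on a prescribed subset $T_0 \subseteq T$ with $|T_0| = \noseq$ (supplementing with roots outside $T$ if $|T| < \noseq$): this annihilates those $\noseq$ summands, and every surviving $A_\tau$ shares the common factor $E(X) = \prod_{\tau_0 \in T_0}(1 - \tau_0^{\HTmult} X)$ of degree $\noseq$. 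Factoring gives $Q(X) = E(X) R(X)$ with $\deg R \leq |T| - \noseq - 1$, and because $E(0) = 1$ the divisibility by $X^{\lenseq - 1}$ transfers to $R(X)$. Whenever $R \not\equiv 0$ this forces $\lenseq - 1 \leq \deg R$, so that $|\SET{Y}| \geq |T| \geq \lenseq + \noseq$, which is the desired bound.

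The main obstacle will be ruling out the degenerate case $R \equiv 0$, which corresponds to the simultaneous cancellation $\Lambda_\tau P(\tau) = 0$ for every $\tau \notin T_0$. By varying $T_0$ so that any prescribed $\tau^\ast \in T$ is excluded (feasible whenever $|T| \geq \noseq + 1$), a repeated application of the same argument would force $\Lambda_{\tau^\ast} = 0$ for each $\tau^\ast$, and hence $N_j \equiv 0$ for every $j$. The residual range $|T| \leq \noseq$ is then dispatched exactly as in the opening non-triviality step of Theorem~\ref{theo_BCHRR}: vanishing of all the $N_j$ forces $\HasseDer{c}{p^{\mult}-1}$ to vanish on the entire set $\{\gamma^0,\ldots,\gamma^{\CYCnprime-1}\}$, contradicting $c \neq 0$. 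Once these cancellations are excluded, the degree comparison directly delivers $\CYCd \geq \lenseq + \noseq = \BoundTwo$.
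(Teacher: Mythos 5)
Your proposal follows essentially the same route as the paper's proof: the paper picks scalars $\lambda_0,\dots,\lambda_{\noseq}$ via a Vandermonde system so that the linear combination of the $\noseq+1$ shifted sequences annihilates $\noseq$ of the partial-fraction terms, which is exactly your choice of a degree-$\noseq$ polynomial $P$ vanishing on $T_0$, and the concluding numerator-degree count $\lenseq-1\le y-1-\noseq$ giving $\CYCd\ge\lenseq+\noseq$ is identical. Your grouping of support positions by their evaluation point $\tau=\gamma^{u-p^{\mult}+1}$ and your explicit discussion of the degenerate case $R\equiv 0$ (reduced, by varying $T_0$, to the same non-triviality step used at the start of the proof of Thm.~\ref{theo_BCHRR}) merely make explicit details the paper leaves implicit, so the argument matches the published proof.
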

\begin{IEEEproof}
Let $c(X) \in \CYC$ and let $\SET{Y} = \{i_0,i_1,\dots,i_{y-1} \}$ denote the support of $c(X)$, where $y \geq d$ holds for all codewords except the all-zero codeword.
We linearly combine the $\noseq+1$ sequences from~\eqref{eq_ht_proof_rr_basis}. Denote the scalar factors for each power series as in~\eqref{eq_ht_proof_rr_basis} by $\lambda_i \in \F{q^{\CYCs}}$ for $i \in \inter{\noseq+1}$. 
We obtain:
\begin{equation} \label{eq_proofHTRR1}
\sum_{i=0}^{\infty} \sum_{j=0}^{\noseq}  \lambda_j \HasseDer{c}{p^{\mult}-1}(\gamma^{\HTconst+i\HTmult +j}) X^i \equiv 0 \mod X^{\lenseq-1}.
\end{equation}
The Hasse derivative (as defined in~\eqref{eq_defHasseDerivative}) of~\eqref{eq_proofHTRR1} leads to:
\begin{equation} \label{eq_proofHTRR2}
\sum_{i=0}^{\infty} \sum_{j=0}^{\noseq}  \lambda_j \left( \sum_{u \in \SET{Y}} \binomsmall{u}{p^{\mult}-1} c_{u}\gamma^{(u-p^{\mult}+1)(\HTconst+i\HTmult+j)} \right) X^i  \equiv 0 \mod X^{\lenseq-1}.
\end{equation}
We re-order~\eqref{eq_proofHTRR2} according to the coefficients of the codeword and obtain:
\begin{align} \label{eq_ProofHTIntermed1}
\sum_{i=0}^{\infty} \sum_{u \in \SET{Y}} \sum_{j=0}^{\noseq} \lambda_j \left( \binomsmall{u}{p^{\mult}-1} c_{u}\gamma^{(u-p^{\mult}+1)(\HTconst+i\HTmult+j)} \right)  X^i & = \sum_{i=0}^{\infty} \sum_{u \in \SET{Y}} \left( \binomsmall{u}{p^{\mult}-1} c_{u}\gamma^{(u-p^{\mult}+1)(\HTconst + im)} \sum_{j=0}^{\noseq} \alpha^{uj}\lambda_j \right) X^i \nonumber \\
&  \equiv 0 \mod X^{\lenseq-1}.
\end{align}
We want to annihilate the first $\noseq$ terms of $c_{i_0}, c_{i_1}, \dots, c_{i_{y-1}}$. From~\eqref{eq_ProofHTIntermed1}, the following linear system of equations with $\noseq+1$ unknowns is obtained:
\begin{align} \label{eq_SystemForCoefficientsa}
\begin{pmatrix}
1 & \gamma^{i_0} & \gamma^{i_0 2} & \cdots & \gamma^{i_0 \noseq}  \\ 
1 & \gamma^{i_1} & \gamma^{i_1 2} & \cdots & \gamma^{i_1 \noseq }  \\ 
\vdots & \vdots  & \vdots & \ddots &  \vdots \\ 
1 & \gamma^{i_{\noseq}} & \gamma^{i_{\noseq} 2} & \cdots & \gamma^{i_{\noseq} \noseq}  \\ 
\end{pmatrix} \cdot
\begin{pmatrix}
\lambda_0 \\ 
\lambda_1 \\ 
\vdots \\
\lambda_{\noseq}
\end{pmatrix}
=
\begin{pmatrix}
0 \\ 
\vdots \\
0 \\ 
1
\end{pmatrix},
\end{align}
and it is guaranteed to find a unique non-zero solution, because the $(\noseq+1) \times (\noseq+1)$ matrix in~\eqref{eq_SystemForCoefficientsa} is a Vandermonde matrix.

Let $\tilde{\SET{Y}} \defeq \SET{Y} \setminus \{ i_0,i_1,\dots, i_{\noseq-1}\}$. Then, we can rewrite~\eqref{eq_ProofHTIntermed1}:
\begin{align*} 
\sum_{i=0}^{\infty} & \left( \sum_{u \in \tilde{\SET{Y}}} \binomsmall{u}{p^{\mult}-1} c_{u} \gamma^{(u-p^{\mult}+1)(\HTconst+i\HTmult)}\sum_{j=0}^{\noseq} \gamma^{uj}\lambda_j \right) X^{i} \equiv 0 \mod X^{\lenseq-1}.
\end{align*}
This leads with the geometric series to:
\begin{align*}
\sum_{u \in \tilde{\SET{Y}}} \frac{  \binomsmall{u}{p^{\mult}-1} c_{u} \gamma^{(u-p^{\mult}+1) \HTconst}
 \sum_{j=0}^{\noseq} \gamma^{uj}\lambda_j }{1-\gamma^{(u-p^{\mult}+1) \HTmult} X} & \equiv 0 \mod X^{\lenseq-1},
\end{align*}
and can be expressed with one common denominator using~\eqref{eq_CommonDenominator} as follows:
\begin{align*}
\frac{\sum\limits_{u \in \tilde{\SET{Y}}} \Bigg(  \binomsmall{u}{p^{\mult}-1} c_{u} \gamma^{(u-p^{\mult}+1) \HTconst}  \sum_{j=0}^{\noseq} \gamma^{uj}\lambda_j      
\frac{\lcm \brac{  1-{\gamma^{(j-p^{\mult}+1)m}X}\ :\ {j \in \SET{Y}}}}{1-{\gamma^{(j-p^{\mult}+1)m}X}} \Bigg)}
{\lcm \brac{  1-{\gamma^{(i-p^{\mult}+1)m}X}\ :\ {i \in \SET{Y}}  }} \equiv 0 \mod X^{\lenseq-1},
\end{align*}
where the degree of the numerator is smaller than or equal to $y-1-\noseq$ and has to be at least $\lenseq-1$. Therefore for $y \geq d$, we have:
\begin{align*}
\CYCd-1-\noseq & \geq \lenseq-1,\\
\CYCd &  \geq \BoundTwo \defeq \lenseq + \noseq.
\end{align*}
\end{IEEEproof}
Note that for $\noseq = 0$, Thm.~\ref{theo_HTRR} becomes Thm.~\ref{theo_BCHRR}.
Thm.~\ref{theo_HTRR} tells us that an $\LINq{n=p^{\mult} \CYCnprime}{k}{d}$ repeated-root cyclic code with generator polynomial $g(X)$ that has $\noseq+1$ sequences of $\delta-1$ consecutive zeros of highest multiplicity $p^{\mult}$, i.e., 
\begin{align*}
\HasseDer{g}{p^{\mult}-1}(\gamma^{\HTconst}) = \HasseDer{g}{p^{\mult}-1}(\gamma^{\HTconst+\HTmult}) =  & \dots = \HasseDer{g}{p^{\mult}-1}(\gamma^{\HTconst +(\lenseq-2)\HTmult}) = 0 \\
\HasseDer{g}{p^{\mult}-1}(\gamma^{\HTconst+1}) = &  \HasseDer{g}{p^{\mult}-1}(\gamma^{\HTconst+\HTmult+1}) = \dots = \HasseDer{g}{p^{\mult}-1}(\gamma^{\HTconst +(\lenseq-2)\HTmult +1}) = 0 \\
& \quad \quad  \quad \quad \vdots \\
& \HasseDer{g}{p^{\mult}-1}(\gamma^{\HTconst + \noseq}) = \HasseDer{g}{p^{\mult}-1}(\gamma^{\HTconst+\HTmult + \noseq}) = \dots= \HasseDer{g}{p^{\mult}-1}(\gamma^{\HTconst +(\lenseq-2)\HTmult +\noseq}) = 0,
\end{align*}
has at least minimum distance $\lenseq+\noseq$. If $s=0$, the repeated-root cyclic code is a simple-root cyclic code and then Thm.~\ref{theo_BCHRR} coincides with the HT bound~\cite{hartmann_decoding_1972, hartmann_generalizations_1972}.
\begin{remark}[Alternative Proof of the Two Bounds] \label{rem_alternative}
An $\LINq{\CYCn=p^{\mult} \CYCnprime}{\CYCk}{\CYCd}$ repeated-root cyclic code with considered consecutive set(s) of zeros with multiplicity $p^{\mult}$ as in Thm.~\ref{theo_BCHRR} and Thm.~\ref{theo_HTRR} is a sub-code of a cyclic code of length $\CYCnprime$ over $\F{q^{p^{\mult}}}$ with the same zeros (see Lemma~\ref{lem_FoldingRepeatedRoots}). 
\end{remark}
Let us consider an example of a binary repeated-root cyclic code and use Thm.~\ref{theo_HTRR} to bound its minimum distance.
\begin{example}[Binary Repeated-Root Cyclic Code] \label{ex_RRCode}
Let $\CYC$ be the binary $\LIN{34 = 2 \cdot 17}{18}{5}{2}$ repeated-root cyclic code with defining set as defined in~\eqref{eq_definingset}:
\begin{equation*}
\defset{\CYC}  = \Big\{ 1^{\langle 2 \rangle}, 2^{\langle 2 \rangle}, 4^{\langle 2 \rangle}, 8^{\langle 2 \rangle}, 9^{\langle 2 \rangle}, 13^{\langle 2 \rangle}, 15^{\langle 2 \rangle}, 16^{\langle 2 \rangle} \Big \},
\end{equation*}
i.e., its generator polynomial is:
\begin{equation*}
g(X) = M_{1,17,2}(X)^2.
\end{equation*}
Thm.~\ref{theo_HTRR} holds for the parameters $\HTconst=1$, $\HTmult=7$, $\lenseq=4$ and $\noseq=1$ and therefore the minimum distance of $\CYC$ is at least 5.
\end{example}

\subsection{Syndrome-Based Burst Error Decoding Algorithm up to Bound I and Bound II} \label{subsec_decodingBoundOneTwo}
Let $\zeta \in \F{q^{p^{\mult}}}$ be such that $(1 \ \zeta \ \dots \ \zeta^{p^{\mult}-1})$ is an $\Fq$-basis of the extension field $\F{q^{p^{\mult}}}$. We define the following bijective map:
\begin{align*}
\phi: \ \F{q}^{p^{\mult}} & \longrightarrow  \F{q^{p^{\mult}}} \\
(a_0 \ a_1 \ \dots \ a_{p^{\mult}-1} ) & \longmapsto a_0 + a_1 \zeta + \cdots a_{p^s-1} \zeta^{p^{\mult}-1}.
\end{align*}
\begin{definition}[Folded Code] \label{def_folding}
Let $\CYC$ be a linear code over $\F{q}$ of length $\CYCn=p^{\mult} \CYCnprime$. The folded code $\CYC^{F}$ of length $\CYCnprime$ over $\F{q^{p^{\mult}}}$ is defined by:
\begin{equation*}
\CYC^{F} \defeq \Big\{ \big(\phi(c_0 \ \dots \ c_{p^{\mult}-1}) \ \dots \
\phi(c_{\CYCn-p^{\mult} } \ \dots \ c_{\CYCn-1}) \big)\ \; | \;  (c_0 \ \dots \ c_{\CYCn-1}) \in \CYC \Big\}.
\end{equation*}
Equivalently, we denote the folded polynomial of a given polynomial $c(X) \in \F{q}[X]$ by $c^{F}(X)$.
\end{definition}
\begin{lemma}[Folding Repeated-Root Cyclic Code] \label{lem_FoldingRepeatedRoots}
Let $\CYC$ be an $\LINq{\CYCn = p^{\mult} \CYCnprime}{\CYCk = p^{\mult} \CYCkprime}{\CYCd}$ repeated-root cyclic code over $\Fq$ with characteristic $p$ and defining set:
\begin{equation*}
\defset{\CYC}  =  \Big\{ i^{\langle p^{\mult} \rangle}  \; | \; i \in  \defset{\CYC^{F}} \Big\},
\end{equation*}
where $|\defset{\CYC^{F}}| = \CYCnprime-\CYCkprime$.
Then the folded code $\CYC^{F}$ as in Def.~\ref{def_folding} is an $\LIN{\CYCnprime}{\CYCkprime}{\CYCd}{q^{p^{\mult}}}$ simple-root cyclic code with defining set $\defset{\CYC^{F}}$.
\end{lemma}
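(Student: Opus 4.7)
The plan is to exploit the assumption that every index in $\defset{\CYC^F}$ carries the maximal multiplicity $p^\mult$: this forces the generator to be a perfect $p^\mult$-th power. First I would let $h(X)\in\Fqx$ be the squarefree polynomial with simple roots $\{\gamma^i:i\in\defset{\CYC^F}\}$, so that $g(X)=h(X)^{p^\mult}$. Applying Frobenius in characteristic $p$ gives
\[
h(X)^{p^\mult}=\sum_r h_r^{p^\mult} X^{r p^\mult}=\hat h(X^{p^\mult}),\qquad \hat h(Y)\defeq\sum_r h_r^{p^\mult} Y^r\in\Fq[Y].
\]
Since $\hat h(Y^{p^\mult})=h(Y)^{p^\mult}$ divides $(Y^\CYCnprime-1)^{p^\mult}=Y^{\CYCn}-1$, substituting $Y^{p^\mult}\mapsto Z$ shows $\hat h(Z)\mid Z^\CYCnprime-1$, so $\hat h$ is a legitimate generator of a cyclic code of length $\CYCnprime$.

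Next I would identify $\CYC^F$ as the cyclic code over $\F{q^{p^\mult}}$ generated by $\hat h(Y)$. Any $c\in\CYC$ has the form $c(X)=\hat h(X^{p^\mult})\,u(X)$ with $\deg u<p^\mult \CYCkprime$. Writing $u$ in base-$p^\mult$ form, $u(X)=\sum_{r=0}^{p^\mult-1} X^r u_r(X^{p^\mult})$ with $u_r\in\Fq[Y]$ of degree less than $\CYCkprime$, yields $c(X)=\sum_r X^r v_r(X^{p^\mult})$ where $v_r\defeq\hat h\,u_r$, so that $c_{p^\mult j+r}=v_{r,j}$. Applying $\phi$ blockwise then produces the key identity
\[
c^F(Y)=\sum_j\Bigl(\sum_{r=0}^{p^\mult-1} v_{r,j}\zeta^r\Bigr) Y^j=\hat h(Y)\,\tilde u(Y),\qquad \tilde u(Y)\defeq\sum_{r=0}^{p^\mult-1}\zeta^r u_r(Y).
\]
Because $(1,\zeta,\dots,\zeta^{p^\mult-1})$ is an $\Fq$-basis of $\F{q^{p^\mult}}$, as $u$ ranges over $\Fqx$ of degree less than $p^\mult \CYCkprime$ the polynomial $\tilde u$ ranges over all of $\F{q^{p^\mult}}[Y]$ of degree less than $\CYCkprime$. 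Hence $\CYC^F$ is the $\F{q^{p^\mult}}$-cyclic code of length $\CYCnprime$ and dimension $\CYCkprime$ generated by $\hat h$; it is simple-root because $\gcd(\CYCnprime,p)=1$. The identity $\hat h\bigl((\gamma^{p^\mult})^j\bigr)=h(\gamma^j)^{p^\mult}$ identifies its defining set as $\defset{\CYC^F}$ once $\gamma^{p^\mult}$ is taken as the reference primitive $\CYCnprime$-th root of unity (a valid choice because $\gcd(p^\mult,\CYCnprime)=1$).

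The step I expect to require the most care is matching the minimum distance. The inequality $\mathrm{wt}(c^F)\le\mathrm{wt}(c)$ is immediate, so a minimum-weight $c\in\CYC$ already shows $d(\CYC^F)\le\CYCd$. The opposite inequality is the heart of the lemma, and I would establish it through the layer decomposition $c=\sum_{r=0}^{p^\mult-1} c^{(r)}$ with $c^{(r)}(X)\defeq X^r v_r(X^{p^\mult})=h(X)^{p^\mult}\cdot X^r u_r(X^{p^\mult})$, which exhibits every layer as itself a codeword of $\CYC$. A direct comparison yields $\mathrm{wt}\bigl(c^{(r)}\bigr)=|\{j:v_{r,j}\ne 0\}|\le|\{j:\exists r',\,v_{r',j}\ne 0\}|=\mathrm{wt}(c^F)$. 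If $c^F\ne 0$ then $c\ne 0$ and some layer $c^{(r)}$ is nonzero, hence has weight at least $\CYCd$, which forces $\mathrm{wt}(c^F)\ge\CYCd$. Combining both directions gives $d(\CYC^F)=\CYCd$.
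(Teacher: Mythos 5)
Your proof is correct and takes essentially the same route as the paper's: write the message polynomial in base $p^{\mult}$ so that every codeword splits into $p^{\mult}$ interleaved layers, each a multiple of the generator of $\CYC$, and observe that folding turns these layers into $\F{q^{p^{\mult}}}$-multiples of the depressed degree-$(\CYCnprime-\CYCkprime)$ generator. Your write-up is in fact more complete than the published proof: the two-sided minimum-distance argument via the layer codewords, the Frobenius twist $\hat h$ of the coefficients (the paper simply writes $g(X^{p^{\mult}})=g(X)^{p^{\mult}}$), and the switch to $\gamma^{p^{\mult}}$ as the reference root of unity for the defining set are all points the paper passes over silently.
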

\begin{proof}
Length and dimension of $\CYC^{F}$ follow directly from Def.~\ref{def_folding}. Let us prove the defining set. Every codeword $c(X)$ of the given repeated-root cyclic root $\CYC$ can be written as
\begin{equation*}
c(X) = \sum_{i=0}^{p^{\mult}-1} X^i \sum_{j=0}^{\CYCnprime-1} c_{i+jp^{\mult}}X^{j p^{\mult}} = \sum_{i=0}^{p^{\mult}-1} X^i \sum_{j=0}^{\CYCkprime-1} u_{i,j} X^{j p^{\mult} } g(X^{p^{\mult}}),
\end{equation*}
where $g(X^{p^{\mult}}) = g(X)^{p^{\mult}}$ is the generator polynomial of $\CYC$ with $\CYCnprime-\CYCkprime$ distinct roots of multiplicity $p^{\mult}$.
The corresponding codeword of the folded code $\CYC^{F}$ over $\F{q^{p^{\mult}}}$ in vector notation is:
\begin{equation*}
c^{F}(X) = \sum_{j=0}^{\CYCnprime-1} 
\begin{pmatrix}
c_{0+jp^{\mult}} \\
c_{1+jp^{\mult}} \\
\vdots \\ 
c_{p^{\mult}-1+jp^{\mult}}
\end{pmatrix}
X^{j} = 
\sum_{j=0}^{\CYCkprime-1} 
\begin{pmatrix}
u_{0,j}\\
u_{1,j} \\
\vdots \\ 
u_{p^{\mult}-1,j}
\end{pmatrix}
g(X)
\end{equation*}
and has $\CYCnprime-\CYCkprime$ distinct roots of multiplicity one.
\end{proof}
Folding as given in Def.~\ref{def_folding} is discussed extensively in the literature, especially for Reed--Solomon codes (see e.g.~\cite{krachkovsky_decoding_1997, sidorenko_decoding_2008, guruswami_linear-algebraic_2011}). The operation is essential to decode a given repeated-root cyclic code. 
In the following we describe the decoding approach for $p^{\mult}$-phased burst errors, i.e., errors measured in $\F{q^{p^{\mult}}}$. 
The transmitted (or stored) codeword $c(X)$ of a given $\LINq{p^{\mult} \CYCnprime}{\CYCk}{\CYCd}$ repeated-root cyclic code $\CYC$ is affected by an error $e(X) \in \Fqx$. The received polynomial $r(X) \in \Fqx$ is $r(X) = c(X) + e(X)$. We fold the received word $r(X)$ as in Def.~\ref{def_folding} and obtain
\begin{equation*}
r^{F}(X) = c^{F}(X) + e^{F}(X), 
\end{equation*}
where $e^{F}(X) = \sum_{i \in \ErrorSet} e_i^{F} X^i $ and $\ErrorSet$ is the set of $p^{\mult}$-phased burst error with cardinality $|\ErrorSet| = \NoErrors$.
We describe a syndrome-based decoding procedure up to $ \NoErrors \leq \lfloor (\BoundTwo-1)/2 \rfloor$ $p^{\mult}$-phased burst-errors based on a set of $\noseq+1$ key equations that can be solved by a modified variant of the Extended Euclidean Algorithm (EEA) similar to the procedure to decode simple-root cyclic codes up to the HT bound (see e.g.,~\cite{feng_generalized_1989,feng_generalization_1991,zeh_fast_2011}). Let us first define syndromes in the corresponding extension field. 
\begin{definition}[Syndromes] \label{def_synd}
Let $\CYC$ be an $\LINq{\CYCn}{\CYCk}{\CYCd}$ repeated-root cyclic code over $\Fq$ with characteristic $p$, where $\CYCn = p^{\mult} \CYCnprime$. The integers $\HTconst$, $\HTmult \neq 0$, $\lenseq \geq 2$ and $\noseq \geq 0$ are given as in Thm.~\ref{theo_HTRR}. Let $\gamma \in \F{q^{\CYCs}}$ be an element of order $\CYCnprime$.
We define $\noseq+1$ syndrome polynomials $S^{\langle 0 \rangle}(X), S^{\langle 1 \rangle}(X), \dots, S^{\langle \noseq \rangle}(X)  \in \Fxsub{q^{\CYCs p^{\mult}}}$ for a received polynomial $r(X) \in \Fqx$, respectively the folded version $r^{F}(X) \in \Fxsub{q^{p^{\mult}}}$, as follows:
\begin{equation} \label{eq_defsynd}
S^{\langle t \rangle}(X) \defeq \sum_{i=0}^{\lenseq-2} r^{F}(\gamma^{\HTconst+i\HTmult+t}) X^{i}, \quad \forall t \in \inter{\noseq+1}.
\end{equation}
\end{definition}
To obtain an algebraic description in terms of key equations, we define an error-locator polynomial in the following.
\begin{definition}[Error-Locator Polynomial] \label{def_elp}
Let $\gamma$ be an element of order $\CYCnprime$ in $\F{q^{\CYCs}}$ and let $\HTmult \neq 0$ as in Thm.~\ref{theo_HTRR}. The support of the additive error is $\ErrorSet$ with $|\ErrorSet | = \NoErrors$. Define the error-locator polynomial in $\Fxsub{q^{\CYCs}}$, as:
\begin{equation} \label{eq_defELP}
\Lambda(X) \defeq \prod_{i \in \ErrorSet} \left( 1-X\gamma^{i\HTmult} \right), 
\end{equation}
with degree $\NoErrors$.
\end{definition}
We now connect Def.~\ref{def_synd} and Def.~\ref{def_elp}. From the expression of the syndrome polynomials as in~\eqref{eq_defsynd}, we obtain with the folded received polynomial $r^{F}(X) = c^{F}(X) + e^{F}(X)$:
\begin{align} \label{eq_syndonlyfromerror}
S^{\langle t \rangle}(X) & = \sum_{i=0}^{\lenseq-2} r^{F}(\gamma^{\HTconst+i\HTmult +t }) X^{i} \nonumber \\ 
& = \sum_{i=0}^{\lenseq-2} e^{F}(\gamma^{\HTconst+i\HTmult +t }) X^{i} \nonumber \\ 
& = \sum_{i=0}^{\lenseq-2} \left( \sum_{j \in \ErrorSet} \left( \sum_{u=0}^{p^{\mult}-1} e_{u+jp^{\mult}} \zeta^{u} \right) \gamma^{(\HTconst + i\HTmult + t)j } \right) X^i, \quad \forall t \in \inter{\noseq},
\end{align}
i.e., the syndromes are independent of the folded codeword $c^{F}(X)$. We use the geometric series and we obtain from~\eqref{eq_syndonlyfromerror}:
\begin{align} \label{eq_syndgeometric}
\sum_{i=0}^{\infty} \left( \sum_{j \in \ErrorSet} \left( \sum_{u=0}^{p^{\mult}-1} e_{u+jp^{\mult}} \zeta^{u} \right) \gamma^{(\HTconst + i\HTmult + t)j} \right) X^i \equiv \sum_{j \in \ErrorSet} \left( \sum_{u=0}^{p^{\mult}-1} e_{u+jp^{\mult}} \zeta^{u} \right) \frac{\gamma^{(\HTconst + t) j}}{1-X \gamma^{j\HTmult}} \mod X^{\lenseq-1}.
\end{align}
We need two more steps to obtain a common denominator. From~\eqref{eq_syndgeometric}, we have:
\begin{align} \label{eq_KeyEquation}
S^{\langle t \rangle}(X) & \equiv \sum_{j \in \ErrorSet} 
\left(\sum_{u=0}^{p^{\mult}-1} e_{u+jp^{\mult}} \zeta^{u} \right) \frac{\gamma^{(\HTconst+ t)j }}{1-X \gamma^{j \HTmult}} \mod X^{\lenseq - 1 } \nonumber \\
& \equiv  \frac{\sum_{j \in \ErrorSet} \left( \sum_{u=0}^{p^{\mult}-1} e_{u+jp^{\mult}} \zeta^{u} \right) \gamma^{(\HTconst+ t)j } \prod_{\substack{i \in \ErrorSet \\ i \neq j}} (1-X\gamma^{i \HTmult})}{\prod_{i \in \ErrorSet }(1-X \gamma^{i \HTmult})} \mod X^{\lenseq - 1} \nonumber \\
& \defequiv \frac{\Omega^{\langle t \rangle}(X)}{\Lambda(X)} \mod X^{\lenseq-1}, \quad \forall t \in \inter{\noseq},
\end{align}
where
\begin{equation} \label{eq_eep}
\Omega^{\langle t \rangle}(X) \defeq \sum_{j \in \ErrorSet} \left(\sum_{u=0}^{p^{\mult}-1} e_{u+jp^{\mult}} \zeta^{u} \right) \gamma^{(\HTconst + t) j}  \prod_{\substack{i \in \ErrorSet\\ i \neq j}} \left( 1-X\gamma^{i \HTmult} \right), \quad \forall t \in \inter{\noseq},
\end{equation}
are the $\noseq+1$ error-evaluator polynomials $\Omega^{\langle 0 \rangle}(X),\Omega^{\langle 1 \rangle}(X),\dots, \Omega^{\langle \noseq \rangle}(X)$ of degree at most $\NoErrors -1$.

The $\noseq+1$ key equations as in~\eqref{eq_KeyEquation} can be collaboratively solved by a so-called multisequence shift-register synthesis (see e.g.,~\cite{feng_generalized_1989,feng_generalization_1991}). Algorithm~\ref{algo_EEARR} is based on the Generalized Extended Euclidean Algorithm (GEEA) that solves the corresponding multisequence problem.
\begin{center}
\begin{minipage}[htbp]{.95\textwidth}
\begin{algorithm}[H]
\label{algo_EEARR}
\SetAlgoVlined
\DontPrintSemicolon
\LinesNumbered
\SetKwInput{KwPre}{Preprocessing}
\SetKwInput{KwIn}{Input}
\SetKwInput{KwOut}{Output}
\SetCommentSty{textsf}
\BlankLine
\KwIn{Received word $r(X) \in \Fqx$, element $\gamma$ of order $\CYCnprime$\\ 
\textcolor{white}{\textbf{Input}: }Parameters  $\HTconst$, $\HTmult \neq 0$, $\lenseq \geq 2$ and $\noseq \geq 0$ as in Thm.~\ref{theo_HTRR}} 
\KwOut{Estimated folded codeword $c^{F}(X)$ or \texttt{DecodingFailure}}
\BlankLine 
Calculate $S^{\langle 0 \rangle}(X),\dots, S^{\langle \noseq \rangle}(X)$ as in~\eqref{eq_defsynd} using folded $r^{F}(X)$ \tcp*[r]{Syndrome calculation} 
\BlankLine
$\Lambda(X), \Omega^{\langle 0 \rangle}(X), \dots, \Omega^{\langle \noseq \rangle}(X) = \texttt{GEEA}\big(X^{\lenseq-1}, S^{\langle 0 \rangle}(X), \dots, S^{\langle \noseq \rangle}(X) \big)$ \nllabel{alg_DecodeFrac_CallEEA}\tcp*[r]{Generalized EEA}  
\BlankLine
Find all $i$, where $\Lambda(\gamma_i)=0 \Rightarrow \ErrorSet=\lbrace i_0,i_1,\dots,i_{\NoErrors-1}\rbrace$ \tcp*[r]{Chien-like search} \nllabel{alg_DecodeFrac_FindRoots}
\BlankLine
\eIf{$\NoErrors < \deg \Lambda(X)$}
{
	Declare \texttt{DecodingFailure}\; 
}
{
Determine $e^{F}_{i_0}, e^{F}_{i_1}, \dots, e^{F}_{i_{\NoErrors-1}}$ \nllabel{alg_ErrorEval} \tcp*[r]{Forney error-evaluation}
\BlankLine
$e^{F}(X)$ $\leftarrow$ $\sum_{\ell \in \ErrorSet} e_{\ell}^{F} X^{\ell}$ \;
\BlankLine
$c^F(X)$ $\leftarrow$ $r^{F}(X)- e^{F}(X)$\;
}
\caption{Decoding a $\LIN{p^{\mult} \CYCnprime}{k}{d}{q}$ repeated-root cyclic code $\CYC$ up to $\left \lfloor (\BoundTwo-1)/2 \right \rfloor $ $p^{\mult}$-phased burst errors.}
\end{algorithm}
\end{minipage}
\end{center}
For the $\noseq+2$ input polynomials $X^{\lenseq-1}$ and $S^{\langle 0 \rangle}(X), S^{\langle 1 \rangle}(X), \dots, S^{\langle \noseq \rangle}(X) $ the GEEA returns the polynomials $\Lambda(X)$, $\Omega^{\langle 0 \rangle}(X)$, $\dots$, $\Omega^{\langle \noseq \rangle}(X)$ in $\Fxsub{q^\CYCs}$, such that~\eqref{eq_KeyEquation} holds (as in Line~\ref{alg_DecodeFrac_CallEEA} of Algorithm~\ref{algo_EEARR}). One error-evaluator polynomial $\Omega^{\langle i \rangle}(X)$ as given in~\eqref{eq_eep} is sufficient for the error-evaluation in Line~\ref{alg_ErrorEval}.

Clearly, for $\noseq=0$ Algorithm~\ref{algo_EEARR} decodes up to $\lfloor (\BoundOne-1)/2 \rfloor$  $p^{\mult}$-phased burst errors. Then, the GEEA coincides with the EEA.

\section{Defining Sets of Repeated-Root Cyclic Product Codes} \label{sec_DefiningSet}
Our third lower bound on the minimum distance of a given repeated-root cyclic code $\CYCa$ is based on the embedding of $\CYCa$ into a repeated-root cyclic product code $\CYCa \otimes \CYCb$. Therefore, we explicitly give the defining set of a repeated-root cyclic product code and stress important properties.

Let $\CYCa$ be an $\LINq{\CYCan = p^{\mult} \CYCnprime_a}{\CYCak}{\CYCad}$ repeated-root cyclic code, where $\gcd(\CYCnprime_a,p) = 1$, and let $\CYCb$ be an $\LINq{\CYCbn}{\CYCbk}{\CYCbd}$ simple-root cyclic code. If $\gcd(\CYCan , \CYCbn) = 1$, then the $\LINq{\CYCn = p^{\mult} \CYCnprime_a \CYCbn}{\CYCak \CYCbk}{\CYCad \CYCbd}$ product code $\CYC = \CYCa \otimes \CYCb$ is (repeated-root) cyclic (see e.g.,~\cite[Ch. 18]{macwilliams_theory_1988} for linear product codes). Note that the lengths of two repeated-root cyclic codes over the same field cannot be co-prime and therefore a cyclic product code is not possible. 

Let us investigate the defining set of a repeated-root cyclic product code in the following theorem, originally stated by Burton and Weldon~\cite[Corollary IV]{burton_cyclic_1965}.
\begin{theorem}[Defining Set and Generator Polynomial of a Cyclic Product Code] \label{theo_DefSetProductCode}
Let $\CYCa$ be an $\LINq{\CYCan = p^{\mult} \CYCnprime_a}{\CYCak}{\CYCad}$ repeated-root cyclic code over $\Fq$ with characteristic $p$, and let $\alpha$ be an element of order $\CYCnprime_a$ in $\F{q^{\CYCas}}$. Let $\CYCb$ be an $\LINq{\CYCbn}{\CYCbk}{\CYCbd}$ simple-root cyclic code and let $\beta$ be an element of order $\CYCbn$ in $\F{q^{\CYCbs}}$. Let $\CYCs = \lcm(\CYCas,\CYCbs)$. The defining sets of $\CYCa$ and $\CYCb$ are denoted by $\defset{\CYCa}$ respectively $\defset{\CYCb}$ and their generator polynomials by $g_a(X)$ respectively $g_b(X)$. Let two integers $\inta$ and $\intb$ be given, such that:
\begin{equation*}
\inta \CYCan + \intb \CYCbn =1.
\end{equation*}
The generator polynomial $g(X)$ of the repeated-root cyclic product code $\CYCa \otimes \CYCb$ is:
\begin{equation} \label{eq_GenPolyCyclicProduct}
g(X) = \gcd \Big( X^{\CYCan \CYCbn}-1, g_a(X^{\intb \CYCbn}) \cdot g_b(X^{\inta \CYCan}) \Big).
\end{equation}
Let $\gamma \defeq \alpha \beta$ in $\F{q^{\CYCs}}$ and let:
\begin{equation} \label{eq_DefSetSimpleRoot}
\bar{\mathsf{D}}_{\CYCb} = \Big\{ i^{\langle p^{\mult}  \rangle} \; | \; i \in \defset{\CYCb} \Big\}.
\end{equation}
Then the defining set of the repeated-root cyclic product code $\CYC = \CYCa \otimes \CYCb$ is:
\begin{align*} \label{eq_DefSetCyclicProduct}
\defset{\CYC} =  \Big\{ \defset{\CYCa} \cup \defset[\CYCnprime_a]{\CYCa} \cup \defset[2 \CYCnprime_a]{\CYCa} \cup \dots \cup \defset[(\CYCbn-1)\CYCnprime_a]{\CYCa} \Big\} \cupmax \Big\{ 
\bar{\mathsf{D}}_{\CYCb} \cup \bar{\mathsf{D}}_{\CYCb}^{[\CYCbn]} \cup \bar{\mathsf{D}}_{\CYCb}^{[2\CYCbn]} \cup \dots \cup \bar{\mathsf{D}}_{\CYCb}^{[(\CYCnprime_a-1)\CYCbn]} \Big\},
\end{align*}
where $\defset[\CYCnprime_a]{\CYCa}$ was defined in~\eqref{eq_definingsetext} and the operation in~\eqref{eq_cupmaxdefintion}.
\end{theorem}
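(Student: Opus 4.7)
The plan is to first establish the generator polynomial formula~(\ref{eq_GenPolyCyclicProduct}) via a CRT-based ring isomorphism, then compute the multiplicity of each candidate root explicitly and assemble the result into the claimed $\cupmax$ form.

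First I would set up the ring isomorphism
\[
\Phi: \Fq[X,Y]/(X^{\CYCan}-1,\ Y^{\CYCbn}-1) \longrightarrow \Fqx/(X^{\CYCan\CYCbn}-1),\quad X \mapsto X^{\intb\CYCbn},\ Y \mapsto X^{\inta\CYCan}.
\]
This is well-defined because $(X^{\intb\CYCbn})^{\CYCan} \equiv 1$ and $(X^{\inta\CYCan})^{\CYCbn} \equiv 1$, and bijective by a dimension argument that uses $\gcd(\CYCan,\CYCbn)=1$ (both sides have $\Fq$-dimension $\CYCan\CYCbn$; surjectivity is seen from $\Phi(XY) = X^{\intb\CYCbn+\inta\CYCan} = X$). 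The product code $\CYCa \otimes \CYCb$ coincides with the principal ideal generated by $g_a(X)\,g_b(Y)$ on the left (rows lie in $\CYCb$, columns in $\CYCa$), which $\Phi$ maps to the ideal generated by $g_a(X^{\intb\CYCbn})\,g_b(X^{\inta\CYCan})$. The canonical monic generator of this ideal in $\Fqx/(X^{\CYCan\CYCbn}-1)$ is $\gcd(X^{\CYCan\CYCbn}-1,\ g_a(X^{\intb\CYCbn})\,g_b(X^{\inta\CYCan}))$, which is~(\ref{eq_GenPolyCyclicProduct}).

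Next I would determine the multiplicity $\mu_k$ of each $\gamma^k$, $k \in \inter{\CYCnprime_a\CYCbn}$, in $g(X)$. In characteristic $p$ one has $X^{\CYCan\CYCbn}-1 = (X^{\CYCnprime_a\CYCbn}-1)^{p^{\mult}}$, so the $\gcd$ caps $\mu_k$ at $p^{\mult}$. From $\inta\CYCan + \intb\CYCbn = 1$ and $\CYCan = p^{\mult}\CYCnprime_a$ I obtain $\intb\CYCbn \equiv 1 \pmod{p^{\mult}\CYCnprime_a}$ and $\intb\CYCbn \equiv 0 \pmod{\CYCbn}$, which yield the key identities $\gamma^{k\intb\CYCbn} = \alpha^k$ and $\gamma^{k\inta\CYCan} = \beta^k$. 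Since $\gcd(\intb\CYCbn,p)=1$, each factor $X^{\intb\CYCbn}-\alpha^i$ is separable, so $\gamma^k$ contributes multiplicity exactly $m_i$ to $g_a(X^{\intb\CYCbn})$ whenever $k \equiv i \pmod{\CYCnprime_a}$ with $i^{\langle m_i\rangle} \in \defset{\CYCa}$. For $g_b(X^{\inta\CYCan})$ the factor $p^{\mult}$ in $\inta\CYCan$ is decisive: using $\gcd(\CYCbn,p)=1$ and the unique $p^{\mult}$-th root of $\beta^j$ in $\langle\beta\rangle$, the factorization $X^{\inta\CYCan}-\beta^j = (X^{\inta\CYCnprime_a}-\beta^{j'})^{p^{\mult}}$ shows that $\gamma^k$ inherits multiplicity at least $p^{\mult}$ whenever $k \equiv j \pmod{\CYCbn}$ with $j \in \defset{\CYCb}$.

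Combining these contributions with the cap at $p^{\mult}$, one obtains three cases: $\mu_k = p^{\mult}$ whenever $k \bmod \CYCbn \in \defset{\CYCb}$ (regardless of $k \bmod \CYCnprime_a$); $\mu_k = m_i$ when $k \bmod \CYCnprime_a = i$ with $i^{\langle m_i\rangle} \in \defset{\CYCa}$ but $k \bmod \CYCbn \notin \defset{\CYCb}$; and $\mu_k = 0$ otherwise. Now the shifts $\defset{\CYCa} \cup \defset[\CYCnprime_a]{\CYCa} \cup \cdots \cup \defset[(\CYCbn-1)\CYCnprime_a]{\CYCa}$ enumerate exactly the indices $k \in \inter{\CYCnprime_a\CYCbn}$ with $k \bmod \CYCnprime_a \in \{i : i^{\langle m_i\rangle} \in \defset{\CYCa}\}$, each carrying multiplicity $m_i$, while $\bar{\mathsf{D}}_{\CYCb} \cup \bar{\mathsf{D}}_{\CYCb}^{[\CYCbn]} \cup \cdots \cup \bar{\mathsf{D}}_{\CYCb}^{[(\CYCnprime_a-1)\CYCbn]}$ enumerates the indices with $k \bmod \CYCbn \in \defset{\CYCb}$, each with multiplicity $p^{\mult}$. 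Taking $\cupmax$ of the two unions, and using $m_i \leq p^{\mult}$ in the overlap case, reproduces $\mu_k$ exactly.

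The main obstacle I anticipate is the multiplicity accounting: when $k \bmod \CYCnprime_a \in \{i : i^{\langle m_i\rangle} \in \defset{\CYCa}\}$ and $k \bmod \CYCbn \in \defset{\CYCb}$ simultaneously, the raw product $g_a(X^{\intb\CYCbn})\,g_b(X^{\inta\CYCan})$ has multiplicity $m_i + p^{\mult} > p^{\mult}$ at $\gamma^k$, and even the $g_b$-contribution alone can exceed $p^{\mult}$ if $p \mid \inta$. What yields the $\cupmax$ (rather than a $\cup$ with summed multiplicities) is precisely the cap imposed by the $\gcd$ with $X^{\CYCan\CYCbn}-1$. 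Justifying this cap uniformly across all four cases—together with the characteristic-$p$ promotion of simple roots of $g_b$ to multiplicity $p^{\mult}$ under the substitution $X \mapsto X^{\inta\CYCan}$—is the crux of the argument and cleanly produces the three-case description captured by $\cupmax$.
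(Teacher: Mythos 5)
Your proposal is correct, and it follows essentially the route of the classical argument that the paper itself defers to: Theorem~\ref{theo_DefSetProductCode} is stated without proof and attributed to Burton and Weldon, whose proof is precisely the CRT-type re-indexing you formalize with $\Phi$ together with the $\gcd$ description of the monic generator of the image ideal. What you supply beyond that citation---and what the paper only gestures at in the remark following the theorem, via $g_b(X^{\inta \CYCan}) = g_b(X^{\inta \CYCnprime_a})^{p^{\mult}}$---is the explicit multiplicity bookkeeping at each $\gamma^k$: the identities $\gamma^{\intb\CYCbn}=\alpha$ and $\gamma^{\inta\CYCan}=\beta$, the separability of $X^{\intb\CYCbn}-\alpha^i$ (valid because $p \mid \CYCan$ forces $\intb\CYCbn \equiv 1 \bmod p$, so this step does use $\mult \geq 1$, which the repeated-root hypothesis provides), the promotion of the simple roots of $g_b$ to multiplicity at least $p^{\mult}$ under $X \mapsto X^{\inta\CYCan}$, and the cap at $p^{\mult}$ coming from $X^{\CYCan\CYCbn}-1=(X^{\CYCnprime_a\CYCbn}-1)^{p^{\mult}}$; together these give exactly the three-case description encoded by $\cupmax$, including the overlap case where $m_i \leq p^{\mult}$ makes the capped sum equal the maximum. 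Two small points to tighten in a full write-up: justify in a line that the product code equals the bivariate ideal generated by $g_a(X)g_b(Y)$ (e.g.\ the dimension argument writing any array with columns in $\CYCa$ and rows in $\CYCb$ as $G_a^{T}UG_b$), and note that when $\inta$ or $\intb$ is negative the exponents are to be read modulo $X^{\CYCan\CYCbn}-1$, which changes neither the congruence identities nor the multiplicities entering the $\gcd$.
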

For the proof we refer to the proof of~\cite[Thm. 3 and Corollary IV]{burton_cyclic_1965}. We explicitly give the defining set of the repeated-root cyclic product code $\CYCa \otimes \CYCb$ here and we want to emphasize that the roots of the simple-root cyclic code $\CYCb$ have highest multiplicity $p^{\mult}$ in the defining set of $\CYCa \otimes \CYCb$ (see~\eqref{eq_DefSetSimpleRoot}), because 
\begin{equation*}
g_b(X^{\inta \CYCan}) = g_b(X^{\inta \CYCnprime_a})^{p^{\mult}}.
\end{equation*}

\section{Bound III: Embedding into Repeated-Root Cyclic Product Codes} \label{sec_BoundsProduct}
Similar to Thm. 4 of~\cite{zeh_generalizing_2013} for a simple-root cyclic code, we embed a given repeated-root cyclic code $\CYCa$ into a repeated-root cyclic product code $\CYCa \otimes \CYCb$ to bound the minimum distance of $\CYCa$.
\begin{theorem}[Bound III: Embedding into a Product Code] \label{theo_GenBCHboundRR}
Let $\CYCa$ be an $\LINq{\CYCan = p^{\mult} \CYCnprime_a}{\CYCak}{\CYCad}$ repeated-root cyclic code over $\Fq$ with characteristic $p$, where $\gcd(\CYCnprime_a,p)=1$ and let $\CYCb$ be an $\LINq{\CYCbn}{\CYCbk}{\CYCbd}$ simple-root cyclic code, respectively, with $\gcdab{\CYCan}{\CYCbn}=1$.
Let $\alpha$ be an element of order $\CYCnprime_a$ in $\F{q^{\CYCas}}$, $\beta$ of order $\CYCbn$ in $\F{q^{\CYCbs}}$, respectively, and let two integers $\HTconsta$, $\HTconstb$ and two non-zero integers $\HTmulta \neq 0$, $\HTmultb \neq 0$ with $\gcdab{\CYCan}{\HTmulta}=\gcdab{\CYCbn}{\HTmultb}=1$ be given. Assume that for all codewords $a(X) \in \CYCa$ and  $b(X) \in \CYCb$ 
\begin{equation} \label{eq_statementBCHgen}
\sum_{i=0}^{\infty}  \HasseDer{a}{p^{\mult}-1}(\alpha^{\HTconsta+i\HTmulta}) \cdot b(\beta^{\HTconstb+i\HTmultb}) X^{i} \equiv 0 \mod X^{\lenseq-1}
\end{equation}
holds for some integer $\lenseq \geq 2$.
Then, we obtain:
\begin{equation} 
\CYCad \geq \BoundThree \defeq \left \lceil \frac{\lenseq}{\CYCbd} \right \rceil.
\end{equation}
\begin{IEEEproof}
From Thm.~\ref{theo_DefSetProductCode} we know that \eqref{eq_statementBCHgen} corresponds to $\lenseq-1$ consecutive zeros with highest multiplicity $p^{\mult}$ of the repeated-root cyclic product code $\CYCa \otimes \CYCb$. By Thm.~\ref{theo_BCHRR}, the minimum distance $\CYCd$ of $\CYCa \otimes \CYCb$ is greater than or equal to $\lenseq$. Therefore:
\begin{align*}
\CYCd = \CYCad \CYCbd \geq \lenseq \quad \Longleftrightarrow \quad \CYCad = \left \lceil \frac{\lenseq}{\CYCbd} \right \rceil.
\end{align*}
\end{IEEEproof}
\end{theorem}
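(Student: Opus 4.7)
The plan is to bootstrap Bound~I (Thm.~\ref{theo_BCHRR}) applied to the repeated-root cyclic product code $\CYC \defeq \CYCa \otimes \CYCb$, together with the classical distance relation $\CYCd = \CYCad \CYCbd$ for cyclic product codes. First I would set $\gamma \defeq \alpha\beta$ and use the Bezout identity $\inta \CYCan + \intb \CYCbn = 1$ (available because $\gcdab{\CYCan}{\CYCbn}=1$) to fix integers $\HTconst, \HTmult$ such that $\gamma^{\HTconst + i\HTmult}$ corresponds via the CRT isomorphism to the pair $(\alpha^{\HTconsta + i\HTmulta},\, \beta^{\HTconstb + i\HTmultb})$ for every $i \in \inter{\lenseq - 1}$. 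Since $\alpha$ and $\beta$ have coprime orders $\CYCnprime_a$ and $\CYCbn$, the element $\gamma$ has order $\CYCnprime_a \CYCbn$, so these are $\lenseq-1$ genuinely distinct consecutive powers of $\gamma$.

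Next I would translate the hypothesis into a statement about $\defset{\CYC}$ using Thm.~\ref{theo_DefSetProductCode}. For each fixed $i$, reading the coefficient of $X^i$ in~\eqref{eq_statementBCHgen} gives a bilinear form on $\CYCa \times \CYCb$ that must vanish identically, hence one of its two factors must vanish on its whole codeword space. In other words, for each $i \in \inter{\lenseq - 1}$, either every $a \in \CYCa$ satisfies $\HasseDer{a}{p^{\mult}-1}(\alpha^{\HTconsta + i\HTmulta}) = 0$ — so that $\alpha^{\HTconsta + i\HTmulta}$ lies in $\defset{\CYCa}$ with multiplicity $p^{\mult}$ — or every $b \in \CYCb$ satisfies $b(\beta^{\HTconstb + i\HTmultb}) = 0$, placing $\beta^{\HTconstb + i\HTmultb}$ in $\defset{\CYCb}$. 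Combining this dichotomy with the $\cupmax$ description in Thm.~\ref{theo_DefSetProductCode} and the automatic lift~\eqref{eq_DefSetSimpleRoot} of $\defset{\CYCb}$ to multiplicity $p^{\mult}$ inside $\defset{\CYC}$, I conclude that $\gamma^{\HTconst}, \gamma^{\HTconst + \HTmult}, \ldots, \gamma^{\HTconst + (\lenseq - 2)\HTmult}$ are zeros of the generator polynomial of $\CYC$ of multiplicity at least $p^{\mult}$.

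At this point Bound~I applied to $\CYC$ with parameters $\HTconst, \HTmult, \lenseq$ yields $\CYCd \geq \lenseq$, and using $\CYCd = \CYCad \CYCbd$ for the cyclic product code gives $\CYCad \CYCbd \geq \lenseq$, whence $\CYCad \geq \lceil \lenseq / \CYCbd \rceil$ because $\CYCad$ is an integer. The step I expect to be the main obstacle is the translation in the second paragraph: making the CRT bridge between $\gamma^{\HTconst + i\HTmult}$ and $(\alpha^{\HTconsta + i\HTmulta},\, \beta^{\HTconstb + i\HTmultb})$ fully explicit and verifying that the universal quantification over $\CYCa \times \CYCb$ in~\eqref{eq_statementBCHgen} is what actually forces the multiplicity-$p^{\mult}$ conclusion — i.e., that the bilinear dichotomy argument places each coefficient index cleanly into the $\cupmax$ of the product-code defining set from Thm.~\ref{theo_DefSetProductCode}, rather than only yielding a weaker partial vanishing.
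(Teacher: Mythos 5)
Your proposal is correct and follows essentially the same route as the paper: embed $\CYCa$ into the product code $\CYCa \otimes \CYCb$, use Thm.~\ref{theo_DefSetProductCode} (with $\gamma=\alpha\beta$ and the coprimality of $\CYCnprime_a$ and $\CYCbn$) to read~\eqref{eq_statementBCHgen} as $\lenseq-1$ consecutive zeros of multiplicity $p^{\mult}$ of $\CYCa \otimes \CYCb$, apply Bound~I to the product code, and divide by $\CYCbd$ using $\CYCd=\CYCad\CYCbd$. The bilinear dichotomy and explicit CRT bridge you single out as the delicate step are precisely the details the paper compresses into its one-line appeal to Thm.~\ref{theo_DefSetProductCode}, so your elaboration fills in, rather than departs from, the published argument.
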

Note that the expression of~\eqref{eq_statementBCHgen} is in $\Fxsub{q^\CYCs}$, where $\CYCs = \lcmab{\CYCas}{\CYCbs}$.
\begin{example}[Bound by Embedding into a Product Code]
Let $\CYCa$ be the $\LIN{34 = 2 \cdot 17}{18}{5}{2}$ repeated-root cyclic code with $p=2$, $\CYCnprime_a = 17$ and defining set:
\begin{equation*}
\defset{\CYCa}  = \Big\{ 1^{\langle 2 \rangle}, 2^{\langle 2 \rangle}, 4^{\langle 2 \rangle}, 8^{\langle 2 \rangle}, 9^{\langle 2 \rangle}, 13^{\langle 2 \rangle}, 15^{\langle 2 \rangle}, 16^{\langle 2 \rangle} \Big\},
\end{equation*}
of Ex.~\ref{ex_RRCode} and let $\CYCb$ denote the $\LIN{3}{2}{2}{2}$ simple-root cyclic parity check code with defining set
\begin{equation*}
{\defset{\CYCb}} = \Big\{ 0^{\langle 1 \rangle} \Big\}.
\end{equation*}
Let $\alpha \in \F{2^8}$ and $\beta \in \F{2^4}$ denote elements of order 17 and 3, respectively. Then, for $\HTconsta=-4$, $\HTconstb=-1$ and $\HTmulta=\HTmultb=1$ Thm.~\ref{theo_GenBCHboundRR} holds for $\delta=10$ and therefore $\CYCad \geq 5$, which is the true minimum distance of $\CYCa$. 

Since $1 \cdot 34  - 11 \cdot 3 = 1$, according to Thm.~\ref{theo_DefSetProductCode}, the defining set of the repeated-root cyclic product code $\CYCa \otimes \CYCb$ is:
\begin{align*}
 \defset{\CYCa \otimes \CYCb}= & \Big\{ \big\{ 1^{\langle 2 \rangle}, 2^{\langle 2 \rangle}, 4^{\langle 2 \rangle}, 8^{\langle 2 \rangle}, 9^{\langle 2 \rangle}, 13^{\langle 2 \rangle}, 15^{\langle 2 \rangle}, 16^{\langle 2 \rangle} \big\} \cup \big\{ 18^{\langle 2 \rangle}, 19^{\langle 2 \rangle}, 21^{\langle 2 \rangle}, 25^{\langle 2 \rangle},26^{\langle 2 \rangle}, 30^{\langle 2 \rangle}, 32^{\langle 2 \rangle}, 33^{\langle 2 \rangle} \big\} \\ 
& \quad \cup \big\{ 35^{\langle 2 \rangle}, 36^{\langle 2 \rangle}, 38^{\langle 2 \rangle}, 42^{\langle 2 \rangle},43^{\langle 2 \rangle}, 47^{\langle 2 \rangle},49^{\langle 2 \rangle},50^{\langle 2 \rangle} \big\} \Big\} \cupmax \Big\{ \big\{0^{\langle 2 \rangle}\big\} \cup \big\{3^{\langle 2 \rangle}\big\} \cup \cdots \cup \big\{48^{\langle 2 \rangle}\big\} \Big\} \\
= & \Big\{ 0^{\langle 2 \rangle},1^{\langle 2 \rangle},2^{\langle 2 \rangle},3^{\langle 2 \rangle},4^{\langle 2 \rangle},6^{\langle 2 \rangle},8^{\langle 2 \rangle},9^{\langle 2 \rangle},12^{\langle 2 \rangle},13^{\langle 2 \rangle},15^{\langle 2 \rangle},16^{\langle 2 \rangle},18^{\langle 2 \rangle},19^{\langle 2 \rangle},21^{\langle 2 \rangle},24^{\langle 2 \rangle},25^{\langle 2 \rangle},26^{\langle 2 \rangle},27^{\langle 2 \rangle},\\
& \quad 30^{\langle 2 \rangle},32^{\langle 2 \rangle},33^{\langle 2 \rangle},35^{\langle 2 \rangle},36^{\langle 2 \rangle},38^{\langle 2 \rangle},39^{\langle 2 \rangle},42^{\langle 2 \rangle},43^{\langle 2 \rangle},45^{\langle 2 \rangle},47^{\langle 2 \rangle},48^{\langle 2 \rangle},49^{\langle 2 \rangle},50^{\langle 2 \rangle} \Big\}.
\end{align*}
\end{example}

\section{Probabilistic Decoding up to Bound III}  \label{sec_Decoding}
In contrast to the decoding approach for $p^{\mult}$-phased burst errors in Section~\ref{subsec_decodingBoundOneTwo}, we do not use folding (as in Def.~\ref{def_folding}) in the following. Instead we decode a given $\LINq{\CYCan = p^{\mult} \CYCnprime_a}{\CYCak}{\CYCad}$ repeated-root cyclic code $\CYCa$ (embedded in a repeated-root cyclic product code $\CYCa \otimes \CYCb$ via an associated single-root cyclic code $\CYCb$ as in Thm.~\ref{theo_GenBCHboundRR}) as a $p^{\mult}$-interleaved code and apply a probabilistic decoder (as e.g. analyzed in~\cite{krachkovsky_decoding_1997, krachkovsky_decoding_1998, schmidt_collaborative_2009}). Note that this decoding method also corrects $p^{\mult}$-phased burst errors.
Let $a(X) \in \CYCa$ and let the received polynomial be $r(X) = a(X) + e(X)$.

Let $p^{\mult}$ polynomials $r^{\langle 0 \rangle}(X), r^{\langle 1 \rangle}(X), \dots, r^{\langle p^{\mult}-1 \rangle}(X) \in \Fqx$ of degree smaller than $\CYCnprime_a$ be given, such that
\begin{equation} \label{eq_RecWords}
r(X) = \sum_{i=0}^{p^s-1} r^{\langle i \rangle}(X^{p^{\mult}}) X^{i}, 
\end{equation} 
where $\ErrorSet_i$ denotes the corresponding error-positions in $r^{\langle i \rangle}(X)$. The set $\ErrorSet = \cup_{i=0}^{p^{\mult}-1} \ErrorSet_i $ with $\NoErrors = |\ErrorSet|$ is the set of $p^{\mult}$-phased burst-errors.

In the following the set of $p^{\mult}$ key equations is derived and the decoding procedure up to  $ \NoErrors \leq \lfloor \frac{p^{\mult}}{p^{\mult}-1}(\BoundThree-1) \rfloor $ $p^{\mult}$-phased burst errors is described.
\begin{definition}[Syndromes] \label{def_syndb}
Let $\CYCa$ be an $\LINq{\CYCan = p^{\mult} \CYCnprime_a}{\CYCak}{\CYCad}$ repeated-root cyclic code over $\Fq$ with characteristic $p$, where $\gcd(\CYCnprime_a,p)=1$, and $\CYCb$ an $\LINq{\CYCbn}{\CYCbk}{\CYCbd}$ simple-root cyclic code, respectively, with $\gcdab{\CYCan}{\CYCbn}=1$. Let $\alpha$, $\beta$ be elements of order $\CYCnprime_a$ in $\F{q^{\CYCas}}$ and of order $\CYCbn$ in $\F{q^{\CYCbs}}$ respectively. The integers $\HTconsta$, $\HTconstb$, $\HTmulta \neq 0$, $\HTmultb \neq 0$ with $\gcdab{\CYCan}{\HTmulta}=\gcdab{\CYCbn}{\HTmultb}=1$ and $\lenseq \geq 2$ are given as in Thm.~\ref{theo_GenBCHboundRR}.
Furthermore, let $b(X) \in \CYCb$ be a codeword of weight $\CYCbd$. We define $p^{\mult}$ syndrome polynomials $S^{\langle 0  \rangle}(X), S^{\langle 1 \rangle}(X), \dots, S^{\langle p^{\mult-1} \rangle}(X) \in \Fxsub{q^\CYCs}$, where $\CYCs = \lcm(\CYCas, \CYCbs)$ for the received polynomials $r^{\langle 0 \rangle}(X), r^{\langle 1 \rangle}(X), \dots, r^{\langle p^{\mult}-1 \rangle}(X) \in \Fqx$ as in~\eqref{eq_RecWords}:
\begin{equation} \label{eq_defsyndb}
S^{\langle t \rangle}(X) \defeq \sum_{i=0}^{\lenseq-2} r^{\langle t \rangle}(\alpha^{\HTconsta+i\HTmulta}) \cdot b(\beta^{\HTconstb+i\HTmultb}) X^{i}, \quad \forall t \in \inter{p^{\mult}}.
\end{equation}
\end{definition}
To obtain an algebraic description in terms of a key equation, we define an error-locator polynomial in the following.
\begin{definition}[Error-Locator Polynomial] \label{def_elpb}
Let $b(X) = \sum_{i \in \SupportSecond} b_i X^i $ be a codeword of weight $|\SupportSecond| = \CYCbd$ of the associated $\LINq{\CYCbn}{\CYCbk}{\CYCbd}$ simple-root cyclic code $\CYCb$. Let $\alpha$ and $\beta$ be elements of order $\CYCnprime_a$ in $\F{q^{\CYCas}}$ and of order $\CYCbn$ in $\F{q^{\CYCbs}}$, respectively, and let $\HTmulta \neq 0$ and $\HTmultb \neq 0$ be as in Thm.~\ref{theo_GenBCHboundRR}.

The support of the additive error is $\ErrorSet$ with $|\ErrorSet | = \NoErrors$. Define the error-locator polynomial in $\Fxsub{q^{\CYCs}}$, where $\CYCs = \lcm(\CYCas, \CYCbs)$, as:
\begin{equation} \label{eq_defelpb}
\Lambda(X) \defeq \prod_{i \in \ErrorSet} \left( \prod_{j \in \SupportSecond} \left( 1-X\alpha^{i\HTmulta} \beta^{j\HTmultb} \right) \right), 
\end{equation}
with degree $\NoErrors \cdot \CYCbd$.
\end{definition}
For some $j \in \SupportSecond$, let $\CYCnprime_a$ distinct roots of the error-locator polynomial $\Lambda(X)$, as defined in~\eqref{eq_defelpb}, be denoted as:
\begin{equation} \label{eq_RootOfELP}
\gamma_i \defeq \beta^{-j \HTmultb} \alpha^{-i \HTmulta}, \quad \forall i \in \inter{\CYCnprime_a}.
\end{equation}
We pre-calculate $\CYCnprime_a$ roots as in~\eqref{eq_RootOfELP} and identify the error positions of a given error-locator polynomial $\Lambda(X)$ as in Def.~\ref{def_elpb}.

We now connect Def.~\ref{def_syndb} and Def.~\ref{def_elpb}. From the expression of the syndromes in~\eqref{eq_defsyndb}, we obtain:
\begin{align} \label{eq_syndonlyfromerrorb}
S^{\langle t \rangle}(X) & = \sum_{i=0}^{\lenseq-2} r^{\langle t \rangle}(\alpha^{\HTconsta+i\HTmulta}) \cdot b(\beta^{\HTconstb+i\HTmultb}) X^{i} \nonumber \\
& = \sum_{i=0}^{\lenseq-2} e^{\langle t \rangle}(\alpha^{\HTconsta+i\HTmulta}) \cdot b(\beta^{\HTconstb+i\HTmultb}) X^{i} \nonumber \\ 
& = \sum_{i=0}^{\lenseq-2} \left( \sum_{j \in \ErrorSet_t}  e^{\langle t \rangle}_j \alpha^{(\HTconsta + i\HTmulta)j} \cdot \sum_{l \in \SupportSecond} b_l \beta^{(\HTconstb+i\HTmultb)l} \right) X^i, \quad \forall t \in \inter{p^{\mult}}.
\end{align}
As in~\eqref{eq_syndgeometric}, we use the geometric series and we obtain from~\eqref{eq_syndonlyfromerrorb}:
\begin{align} \label{eq_syndgeometricb}
\sum_{i=0}^{\infty} \left( \sum_{j \in \ErrorSet_t} e^{\langle t \rangle}_j \alpha^{(\HTconsta + i\HTmulta)j}  \cdot \sum_{l \in \SupportSecond} b_l \beta^{(\HTconstb+i\HTmultb)l} \right) X^i \equiv \sum_{j \in \ErrorSet_t} e^{\langle t \rangle}_j \alpha^{\HTconsta j}  \sum_{l \in \SupportSecond} \frac{b_l \beta^{\HTconstb l}}{1-X \alpha^{j \HTmulta} \beta^{l \HTmultb} } \mod X^{\lenseq - 1 }.
\end{align}
We need two more steps to obtain a common denominator. From~\eqref{eq_syndgeometricb}, we have:
\begin{align} 
S^{\langle t \rangle}(X) & \equiv \sum_{j \in \ErrorSet_t} e^{\langle t \rangle}_j \alpha^{\HTconsta j}  \sum_{l \in \SupportSecond} \frac{b_l \beta^{\HTconstb l}}{1-X \alpha^{j \HTmulta} \beta^{l \HTmultb} } \mod X^{\lenseq - 1 } \nonumber \\
&  \equiv \sum_{j \in \ErrorSet_t} e^{\langle t \rangle}_j \alpha^{\HTconsta j} \frac{\sum_{l \in \SupportSecond} b_l \beta^{\HTconstb l} \prod_{\substack{i \in \SupportSecond \\ i \neq l}} (1-X\alpha^{j \HTmulta} \beta^{i \HTmultb}) }{\prod_{i \in \SupportSecond}(1-X \alpha^{j \HTmulta} \beta^{i \HTmultb})} \mod X^{\lenseq - 1} \nonumber \\
&  \equiv \frac{\sum_{j \in \ErrorSet_t} \left(  e^{\langle t \rangle}_j \alpha^{\HTconsta j} \sum_{l \in \SupportSecond} \left( b_l \beta^{\HTconstb l} \prod_{\substack{i \in \SupportSecond \\ i \neq l}} (1-X\alpha^{j \HTmulta} \beta^{i \HTmultb}) \right) \prod_{\substack{s \in \ErrorSet\\ s \neq j}} \prod_{\iota \in \SupportSecond} \left( 1-X\alpha^{s \HTmulta}\beta^{\iota \HTmultb} \right) \right)}{\prod_{i \in \ErrorSet_t} \left( \prod_{j \in \SupportSecond} \left( 1-X\alpha^{i\HTmulta} \beta^{j\HTmultb} \right) \right)} \nonumber \\
& \defequiv \frac{\Omega^{\langle t \rangle}(X)}{\Lambda(X)} \mod X^{\lenseq-1}, \quad \forall t \in \inter{p^{\mult}}, \label{eq_KeyEquationb}
\end{align}
where 
\begin{equation} \label{eq_eepb}
\Omega^{\langle t \rangle}(X) \defeq \sum_{j \in \ErrorSet_t} \left(  e^{\langle t \rangle}_j \alpha^{\HTconsta j} \sum_{l \in \SupportSecond} \left( b_l \beta^{\HTconstb l} \prod_{\substack{i \in \SupportSecond \\ i \neq l}} (1-X\alpha^{j \HTmulta} \beta^{i \HTmultb}) \right) \prod_{\substack{s \in \ErrorSet\\ s \neq j}} \prod_{\iota \in \SupportSecond} \left( 1-X\alpha^{s \HTmulta}\beta^{\iota \HTmultb} \right) \right), \quad \forall t \in \inter{p^{\mult}}
\end{equation}
are the $p^{\mult}$ error-evaluator polynomials $\Omega^{\langle 0 \rangle}(X), \Omega^{\langle 1 \rangle}(X), \dots, \Omega^{\langle p^{\mult}-1 \rangle}(X)$ of degree at most $\NoErrors \CYCbd -1$.
We skip the explicit error-evaluation and refer to~\cite[Proposition 4]{zeh_new_2012}.

Algorithm~\ref{algo_EEARRb} summarizes the syndrome-based decoding procedure up to $\left \lfloor \frac{p^{\mult}}{p^{\mult}-1}(\BoundThree-1) \right \rfloor $ $p^{\mult}$-phased burst errors with high probability based on the key equations as in~\eqref{eq_KeyEquationb} and ~\eqref{eq_eepb}. 
\begin{center}
\begin{minipage}[htbp]{.95\textwidth}
\begin{algorithm}[H]
\label{algo_EEARRb}
\SetAlgoVlined
\DontPrintSemicolon
\LinesNumbered
\SetKwInput{KwPre}{Preprocess}
\SetKwInput{KwIn}{Input}
\SetKwInput{KwOut}{Output}
\SetCommentSty{textsf}
\BlankLine
\KwIn{Received word $r(X)$, codeword $b(X) = \sum_{i \in \SupportSecond} b_i X^i \in \CYCb$,\\
\textcolor{white}{\textbf{Input}: }Elements $\alpha$ and $\beta$ of order $\CYCan$ and $\CYCbn$,\\
\textcolor{white}{\textbf{Input}: }Parameters $\HTconsta$, $\HTconstb$, $\HTmulta \neq 0$, $\HTmultb \neq 0$ and $\lenseq \geq 2$ as in Thm.~\ref{theo_GenBCHboundRR}} 
\KwOut{Estimated codeword $a(X) \in \CYCa$ or \texttt{DecodingFailure}}
\BlankLine 
\KwPre{\\
\hspace{1em}\textbf{for} all $i \in \inter{\CYCnprime_a}$: calculate $\gamma_i = \beta^{-j\HTmultb} \alpha^{-i\HTmulta} $, where $j \in \SupportSecond$
}
\BlankLine 
Calculate $S^{\langle 0 \rangle}(X), \dots, S^{\langle p^{\mult}-1 \rangle}(X)$ as in~\eqref{eq_defsyndb} using $r^{\langle 0 \rangle}(X), \dots, r^{\langle p^{\mult}-1 \rangle}(X) $ \tcp*[r]{Syndrome calculation} 
\BlankLine
$\Lambda(X), \Omega^{\langle 0 \rangle}(X), \dots, \Omega^{\langle p^{\mult}-1 \rangle}(X)  = \texttt{GEEA}\big(X^{\lenseq-1}, S^{\langle 0 \rangle}(X),\dots, S^{\langle p^{\mult}-1  \rangle}(X)  \big)$ \nllabel{alg_DecodeFrac_CallGEEAb}\tcp*[r]{Generalized EEA}  
\BlankLine
Find all $i$, where $\Lambda(\gamma_i)=0 \Rightarrow \ErrorSet=\lbrace i_0,i_1,\dots,i_{\NoErrors-1}\rbrace$ \tcp*[r]{Chien-like search} \nllabel{alg_DecodeFrac_FindRootsb}
\BlankLine
\eIf{$\NoErrors | \SupportSecond | < \deg \Lambda(X)$}
{
	Declare \texttt{DecodingFailure}\; 
}
{
\textbf{for} all $i \in \inter{p^{\mult}}$: Determine $e^{\langle i \rangle}(X)$ using $\Omega^{\langle i \rangle}(X)$ as in~\cite[Proposition 4]{zeh_new_2012} \nllabel{alg_ErrorEvalb} \tcp*[r]{Forney-like error-evaluation}
\textcolor{white}{\textbf{for} all $i \in \{p^{\mult}\}$:} $e^{\langle i \rangle}(X)$ $\leftarrow$ $\sum_{j \in \ErrorSet_i} e_{j}^{\langle i \rangle}  X^{j}$ \;
\BlankLine
$a(X)$ $\leftarrow$ $\sum_{i=0}^{p^s-1} \left( r^{\langle i \rangle}(X^{p^{\mult}}) - e^{\langle i \rangle}(X^{p^{\mult}}) \right) X^{i}$\;
}
\caption{Decoding a $\LIN{p^{\mult} \CYCnprime_a}{\CYCak}{\CYCad}{q}$ repeated-root cyclic code $\CYCa$ up to $\left \lfloor \frac{p^{\mult}}{p^{\mult}-1}(\BoundThree-1) \right \rfloor $ $p^{\mult}$-phased burst errors.}
\end{algorithm}
\end{minipage}
\end{center}
All error-evaluator polynomials $\Omega^{\langle 0 \rangle}(X), \Omega^{\langle 1 \rangle}(X), \dots, \Omega^{\langle p^{\mult}-1 \rangle}(X)$ as defined in~\eqref{eq_eep} are needed for the error-evaluation in Line~\ref{alg_ErrorEval}.

Bound III simplifies to the BCH-like generalization of Bound I (as stated in Thm.~\ref{theo_BCHRR}) if the associated code $\CYCb$ is the trivial $\LINq{\CYCbn}{\CYCbn}{1}$ code and decoding up to $\frac{p^{\mult}-1}{p^{\mult}} \lfloor (\BoundOne-1)/2 \rfloor $ $p^{\mult}$-phased burst errors with high probability is possible. Then the $p^s$ parallel operations (as e.g., the syndrome calculation) are computed over $\F{q^{\CYCs}}$ instead in $\F{q^{\CYCs p^{\mult}}}$.

Note that the $p^{\mult}$ cyclic subcodes can be collaboratively list decoded with the approach of Gopalan~\cite{gopalan_list_2011} up to the $q$-ary Johnson radius with relative distance $\BoundOne/\CYCan$.

\section{Conclusion} \label{sec_Conclusion}
We have proved three lower bounds on the minimum distance of a repeated-root cyclic code, i.e., a cyclic code whose length is not relatively prime to the field characteristic. The two first bounds are generalizations of the BCH and the HT bound to repeated-root cyclic codes. A syndrome-based decoding algorithm with a guaranteed radius was developed. The third bound is similar to a previous published technique for simple-root cyclic codes and is based on the embedding of a given repeated-root cyclic code into a repeated-root cyclic product code.
A syndrome-based probabilistic decoding algorithm based on a set of key equations using the third bound was proposed.

\section*{Acknowledgments}
The authors are grateful to Antonia Wachter-Zeh, Johan S.~R. Nielsen and Ron M. Roth for stimulating discussions.

\end{document}